\documentclass[conference]{IEEEtran}
\IEEEoverridecommandlockouts
\usepackage{amsmath,amsfonts}
\usepackage{algorithmic}
\usepackage{algorithm}
\usepackage{array}
\usepackage[caption=false,font=normalsize,labelfont=sf,textfont=sf]{subfig}
\usepackage{textcomp}
\usepackage{stfloats}
\usepackage{url}
\usepackage{verbatim}
\usepackage{graphicx}
\usepackage{cite}
\usepackage{enumitem}
\usepackage[subnum]{cases}
\usepackage{multirow}
\usepackage{booktabs}
\usepackage{bm}
\usepackage{accents}
\usepackage{color}
\usepackage{amsmath}
\usepackage{amsthm}
\usepackage{makecell}
\usepackage[hypertexnames=false]{hyperref}
\makeatletter
\def \wideubar{\underaccent{{\cc@style\underline{\mskip15mu}}}}
\def \widebar{\accentset{{\cc@style\underline{\mskip10mu}}}}
\makeatother

\definecolor{blue}{rgb}{0,0,0.9}
\definecolor{red}{rgb}{0.9,0,0}
\definecolor{green}{rgb}{0,0.9,0}
\definecolor{lightgreen}{rgb}{0.1,0.5,0.1}

\begin{document}
\newtheorem{property}{Property}[section]
\newtheorem{proposition}{Proposition}[section]
\newtheorem{append}{Appendix}[section]
\newtheorem{definition}{Definition}[section]
\newtheorem{lemma}{Lemma}[section]
\newtheorem{corollary}{Corollary}[section]
\newtheorem{theorem}{Theorem}[section]
\newtheorem{remark}{Remark}[section]
\newtheorem{problem}{Problem}[section]
\newtheorem{example}{Example}[section]
\newtheorem{assumption}{Assumption}
\renewcommand*{\theassumption}{\Alph{assumption}}

\title{A Distributed Accelerated Preconditioned ADMM for Solving Quadratically Regularized Optimal Transport on Bipartite Graphs}

\author{\IEEEauthorblockN{Jiayi Zhu,~Haihang Lan$^{*}$}
\IEEEauthorblockA{\textit{School of Computer Science and Engineering} \\
\textit{Sun Yat-sen University}\\
Guangzhou 510006, China \\
e-mail: zhujy86@mail2.sysu.edu.cn, ~ lanhh3@mail2.sysu.edu.cn}
\thanks{$^*$ Corresponding author.}}

\maketitle

\begin{abstract}
Optimal transport (OT) has found broad applications in machine learning, economics, and management. Classical OT models, however, are not directly suited to allocation problems with graph-structured connectivity and unbalanced mass. Distributed OT formulations on bipartite graphs address these requirements but are typically solved by ADMM, whose slow convergence limits communication efficiency. We adapt the accelerated preconditioned ADMM (AP-ADMM) to the quadratically regularized distributed OT model and develop DAP-ADMM, a fully distributed algorithm in which each node communicates only with its graph neighbors. For fixed local penalty parameters, we establish global convergence and non-ergodic $O(1/k)$ bounds for both the KKT-residual norm and the primal objective gap. We derive an exact solver-free routine for the interval-constrained projections arising in the local ADMM subproblems. We propose a distributed self-adaptive penalty mechanism that updates node-specific penalties using locally computable KKT-residual components without global aggregation. Numerical experiments confirm the efficiency of the solver-free routine and the practical acceleration of DAP-ADMM over standard distributed ADMM.
\end{abstract}

\begin{IEEEkeywords}
Optimal transport, distributed optimization, alternating direction method of multipliers, bipartite graphs, Halpern iteration
\end{IEEEkeywords}

\section{Introduction}

 {Optimal transport (OT), dating back to the seminal work of Monge \cite{monge1781memoire} and Kantorovich \cite{kantorovich1942translocation}, provides a fundamental framework for constructing cost-efficient couplings between probability distributions. Its ability to incorporate the geometry of the underlying sample space has led to a broad range of applications in machine learning \cite{courty2016optimal,kim2022optimal,luo2022differentiable,wang2022optimal}, computer vision \cite{rabin2015convex,solomon2015convolutional,choi2023generative}, image and signal processing \cite{sun2023spectr,cazelles2020wasserstein}, and statistics and data science \cite{zhang2024wasserstein,delon2020wasserstein,aburidi2024optimal}.}

Beyond these data-centric applications, OT also provides a natural allocation model by minimizing transportation costs between suppliers and demanders \cite{villani2008optimal,galichon2016optimal}. It therefore plays an important role in operations research, including logistics \cite{fajgelbaum2020optimal}, resource allocation \cite{huang2024convergence}, and economic planning and management. Applying OT to practical transportation networks, nevertheless, requires several departures from the classical model.

 {In many practical transportation networks, only a subset of supplier--demander pairs may be admissible because of physical reachability, service regions, or communication constraints \cite{essid2018quadratically,cang2022supervised,mokhtari2025admm}. The transportation plan must therefore respect a prescribed, often sparse bipartite graph.}

 {Moreover, the total supply and demand may not match exactly, and some applications require transporting only part of the available mass. These settings motivate unbalanced \cite{chizat2018scaling} and partial OT \cite{chapel2020partial}, or more generally the use of lower and upper bounds on the amount transported by each node.}

 {These modeling requirements have been incorporated into OT on bipartite graphs. In particular, the formulation in \cite{zhang2019consensus} represents each supply or demand entity as a node, uses admissible transportation links as graph edges, and replaces the equality constraints of classical OT with lower and upper transportation bounds. The same work further develops a distributed algorithm based on the alternating direction method of multipliers (ADMM) \cite{boyd2011distributed}, in which each node performs local computations and exchanges information only with its graph neighbors. This framework has subsequently been extended to several distributed OT settings \cite{hughes2022differentially,hughes2021fair,hughes2021resilient}. Despite this progress, three algorithmic challenges remain central to an efficient implementation:}

\vspace{1mm}

 {\noindent\textbf{Communication efficiency:} For conventional ADMM under general convexity, a standard non-ergodic complexity result bounds the best-iterate squared KKT residual by $O(1/k)$, corresponding to an $O(1/\sqrt{k})$ rate for the KKT residual norm \cite{cui2016convergence}. Consequently, obtaining a high-accuracy solution may still require many iterations. Because every distributed iteration entails information exchange between neighboring nodes, this iteration complexity directly translates into communication overhead.}

\vspace{1mm}

 {\noindent\textbf{Efficient solution of local ADMM subproblems:} Each node must solve a constrained local subproblem at every ADMM iteration. Existing distributed OT methods \cite{zhang2019consensus,hughes2021fair,hughes2021resilient,hughes2022differentially} generally rely on generic optimization routines for these repeated computations. For the quadratically regularized model considered here, completing the square reduces each local subproblem to a Euclidean projection onto the interval-constrained nonnegative simplex $\{\bm{x}\geq0\mid l\leq\mathbf{1}^{\top}\bm{x}\leq u\}$. Essid and Solomon \cite{essid2018quadratically} mentioned a related projection idea for the bipartite setting, but did not develop a complete algorithm for this two-sided interval constraint. Classical simplex and $\ell_{1}$-ball projection methods address the fixed-mass case recovered when the lower and upper bounds coincide \cite{held1974validation,condat2016fast,duchi2008efficient}. Consequently, a complete and efficient routine for the two-sided interval projection arising in every local update is still needed.} 

\vspace{1mm}

 {\noindent\textbf{Distributed tuning of local penalty parameters:} The numerical performance of ADMM-type methods is sensitive to the penalty parameters. Standard residual-based adaptation strategies typically update a common penalty parameter by comparing global primal--dual or KKT residuals \cite{boyd2011distributed,lam2018fast,yang2021fast}. Directly applying such strategies in a decentralized network would require collecting or aggregating residual information from all nodes. It is therefore nontrivial to adapt multiple node-specific penalty parameters while retaining a fully distributed implementation.}

 {To address these three challenges, this paper develops a distributed accelerated preconditioned ADMM framework for quadratically regularized OT on bipartite graphs, building on AP-ADMM \cite{syzz2025accelerating}. The main contributions are summarized as follows:}
\begin{enumerate}
	\item  {We derive a distributed implementation of AP-ADMM, termed DAP-ADMM, in which each supply or demand node performs local updates and communicates only with its graph neighbors. For fixed local penalty parameters, we establish global convergence and non-ergodic $O(1/k)$ bounds for the KKT residual and the primal objective gap;}
	\item  {We develop an exact \textit{solver-free} method for the local ADMM subproblems. The method explicitly handles distinct lower and upper bounds and has $O(p\log p)$ arithmetic complexity for a node of degree $p$;}
	\item  {We develop a distributed self-adaptation mechanism by decomposing the global KKT residual into locally computable residual components. Each node uses its local component to adjust its own penalty parameter, thereby avoiding global residual aggregation and preserving the fully distributed implementation of DAP-ADMM.}
\end{enumerate}

The remainder is organized as follows. Section \ref{sec:pre} formulates the quadratically regularized OT problem and its distributed reformulation. Section \ref{sec:DAP-ADMM} develops DAP-ADMM, including its distributed implementation, solver-free subsolver, convergence analysis, and local penalty adaptation. Section \ref{sec:num-exp} reports the numerical results, followed by the conclusion in Section \ref{sec:conclude}.

\vspace{1mm}
\noindent\textbf{Notation.}
{We use $\mathbb{R}^n$, $\mathbb{R}_{+}^n$, and $\mathbb{R}_{++}^n$ for real, nonnegative, and positive vectors, respectively, and $\mathbb{R}^{m\times n}$ for real matrices. The symbols $\bm{0}$, $\bm{1}$, $I$, and $O$ denote zero and all-one vectors, identity matrices, and zero matrices of appropriate dimensions; subscripts are omitted when the dimensions are clear. For vectors and matrices, $x_i$ and $X_{ij}$ denote entries, and $\|\bm{x}\|$ denotes the Euclidean norm.}

For a proper, closed, convex function $f$, $\operatorname{dom}f$, $\partial f$, and $f^{*}$ denote its effective domain, subdifferential, and convex conjugate, respectively. For a closed convex set $\mathcal{S}$, $\delta_{\mathcal{S}}$ and $\Pi_{\mathcal{S}}$ denote its indicator function and Euclidean projection. For $\nu>0$, we use $\mathtt{prox}_{\nu f}(\bm{x}) := \arg\min_{\bm{y}}\{f(\bm{y})+(2\nu)^{-1}\|\bm{y}-\bm{x}\|^{2}\}$.
In particular, $\mathtt{prox}_{\nu\delta_{\mathcal{S}}}(\bm{x})=\Pi_{\mathcal{S}}(\bm{x})$.


\section{Preliminaries}\label{sec:pre}

\subsection{Quadratically Regularized Optimal Transport on Bipartite Graphs}
Let $\mathcal{I}:=\{1,2,\cdots,m\}$ and $\mathcal{J}:=\{1,2,\cdots,n\}$ denote the supply and demand node index sets, respectively. Each supply node $i\in\mathcal{I}$ serves demand nodes in $\mathcal{J}_{i}$, while each demand node $j\in\mathcal{J}$ receives resources from supply nodes in $\mathcal{I}_{j}$. The resulting topology is the undirected connected bipartite graph $\mathcal{G}=\left\{\mathcal{I},\mathcal{J},\mathcal{E}\right\}$, where
\begin{equation*}
	\begin{aligned}
		\mathcal{E} &:= \left\{(i,j) ~ \big| \, \textrm{$i\in\mathcal{I}$ and $j\in\mathcal{J}_{i}$}\right\} \\
		&= \left\{(i,j) ~ \big| \, \textrm{$j\in\mathcal{J}$ and $i\in\mathcal{I}_{j}$}\right\},
	\end{aligned}
\end{equation*}
 {is the edge set. Thus, $\mathcal{I}_{j}$ and $\mathcal{J}_{i}$ are the supply and demand neighbor sets of $j\in\mathcal{J}$ and $i\in\mathcal{I}$, respectively.}

 {Each supply node $i\in\mathcal{I}$ can provide a finite quantity of resources with lower and upper bounds, denoted by $l_{i}^{s}$ and $u_{i}^{s}$ such that $0 \leq l_{i}^{s} \leq u_{i}^{s}$, respectively. Similarly, the resource requirement of each demand node $j\in\mathcal{J}$ also owns a pair of lower and upper bounds, denoted by $l_{j}^{d}$ and $u_{j}^{d}$, where $0 \leq l_{j}^{d} \leq u_{j}^{d}$. Our objective is to determine the transportation flow $\pi_{ij}$ on each edge $(i,j)\in\mathcal{E}$, ensuring that the supply and demand constraints are satisfied while the total transportation cost is minimized. For simplicity, let $\{x_{ij} \mid (i,j)\in\mathcal{E}\}$ be edge-associated elements and define $\bm{x}=[x_{ij}]_{(i,j)\in\mathcal{E}}$ with entries in lexicographic order.\footnote{Entries are ordered lexicographically: for each $i=1,\ldots,m$, sort $j\in\mathcal{J}_{i}$ increasingly, then proceed to $i+1$.}} We then denote by $\bm{x}_{(i,:)} = [x_{ij}]_{j\in\mathcal{J}_{i}}$ and $\bm{x}_{(:,j)} = [x_{ij}]_{i\in\mathcal{I}_{j}}$ the sub-vectors corresponding to nodes $i\in\mathcal{I}$ and $j\in\mathcal{J}$, respectively.

 {The problem is formulated as a quadratically regularized optimal transport problem on $\mathcal{G}$:}
\begin{equation}\label{eq:GOT}
	\begin{aligned}
		\min_{\{\pi_{ij}\}_{(i,j)\in\mathcal{E}}} & \quad \sum\limits_{(i,j)\in\mathcal{E}}c_{ij}\pi_{ij}+\frac{\lambda}{2}\pi_{ij}^{2} \\
		\mathrm{s.t.} \quad & \quad l_{i}^{s} \leq \sum\limits_{j \in \mathcal{J}_{i}} \pi_{ij} \leq u_{i}^{s},\quad \forall i\in\mathcal{I}, \\
		& \quad l_{j}^{d} \leq \sum\limits_{i \in \mathcal{I}_{j}} \pi_{ij} \leq u_{j}^{d},\quad \forall j\in\mathcal{J}, \\
		& \quad \pi_{ij} \geq 0,
	\end{aligned}
\end{equation}
where $c_{ij}$ is the unit transportation cost on edge $(i,j)\in\mathcal{E}$ and $\lambda\geq0$ is the quadratic-regularization parameter. If $\lambda=0$, $\mathcal{E}=\mathcal{I}\times\mathcal{J}$, and the capacity bounds coincide with prescribed marginals, i.e., $l_i^s=u_i^s=a_i$, $l_j^d=u_j^d=b_j$, and $\sum_i a_i=\sum_j b_j$, then \eqref{eq:GOT} reduces to the classical discrete OT problem. For $\lambda>0$, the quadratic term enhances the objective's strong convexity, preserves solution sparsity, and models nonlinear transportation costs \cite{essid2018quadratically,lorenz2021quadratically,zhang2019consensus}. We denote $\bm{\pi}=[\pi_{ij}]_{(i,j)\in\mathcal{E}}$ and $\bm{c}=[c_{ij}]_{(i,j)\in\mathcal{E}}$ as the transportation plan and cost vector, respectively.

\subsection{Distributed Reformulation of QROTs on Bipartite Graphs}
 {Following Zhang et al.~\cite{zhang2019consensus}, we reformulate \eqref{eq:GOT} for distributed ADMM:}
\begin{equation}\label{eq:D-GOT}
	\begin{aligned}
		\min_{\bm{s},\bm{d},\bm{\pi}\in\mathbb{R}^{|\mathcal{E}|}} & \quad \sum_{i\in\mathcal{I}}\sum_{j \in \mathcal{J}_{i}}\bigg(c_{ij}^{s}s_{ij} + \frac{\lambda^{s}}{2}s_{ij}^{2}\bigg) + \delta_{\mathcal{U}}(\bm{s}) \\
		& \quad  \; + \sum_{j\in\mathcal{J}}\sum_{i \in \mathcal{I}_{j}}\bigg(c_{ij}^{d}d_{ij} + \frac{\lambda^{d}}{2}d_{ij}^{2}\bigg) + \delta_{\mathcal{V}}(\bm{d}) \\
		\mathrm{s.t.} \quad & \quad \bm{s} = \bm{\pi}, ~~ \bm{\pi} = \bm{d}.
	\end{aligned}
\end{equation}
 {The reformulation introduces local copies $\bm{s},\bm{d}\in\mathbb{R}^{|\mathcal{E}|}$ of $\bm{\pi}$ for supply and demand nodes and consensus constraints $\bm{s}=\bm{\pi}$ and $\bm{\pi}=\bm{d}$. The objective and capacity constraints in \eqref{eq:GOT} then decompose as follows:}
\begin{itemize}
		\item  {The capacity constraints become indicator functions $\delta_{\mathcal{U}}(\cdot)$ and $\delta_{\mathcal{V}}(\cdot)$ of the convex sets $\mathcal{U}$ and $\mathcal{V}$:}
	\begin{equation*}
		\begin{aligned}[b]
			\mathcal{U} &:=\Big\{\bm{\pi}\in\mathbb{R}_{+}^{|\mathcal{E}|} ~ \big| ~ l_{i}^{s} \leq \sum\nolimits_{j \in \mathcal{J}_{i}}\pi_{ij} \leq u_{i}^{s}, ~~\forall i \in \mathcal{I}\Big\}, \\
			\mathcal{V} &:=\Big\{\bm{\pi}\in\mathbb{R}_{+}^{|\mathcal{E}|} ~ \big| ~ l_{j}^{d} \leq \sum\nolimits_{i \in \mathcal{I}_{j}} \pi_{ij} \leq u_{j}^{d}, ~~ \forall j\in\mathcal{J}\Big\} .
		\end{aligned}
	\end{equation*}
	
		\item  {The objective in \eqref{eq:GOT} splits into two parts:}
	\begin{align}
		G_{1}(\bm{\pi}) &:= \sum_{i\in\mathcal{I}}\sum_{j \in \mathcal{J}_{i}}\bigg(c_{ij}^{s}\pi_{ij} + \frac{\lambda^{s}}{2}\pi_{ij}^{2}\bigg) + \delta_{\mathcal{U}}(\bm{\pi}), \label{eq:lin-qua1}\\
		G_{2}(\bm{\pi}) &:= \sum_{j\in\mathcal{J}}\sum_{i \in \mathcal{I}_{j}}\bigg(c_{ij}^{d}\pi_{ij} + \frac{\lambda^{d}}{2}\pi_{ij}^{2}\bigg) + \delta_{\mathcal{V}}(\bm{\pi}), \label{eq:lin-qua2}
	\end{align}
	such that $c_{ij}^{s} + c_{ij}^{d} = c_{ij}$, $\lambda^{s} + \lambda^{d} = \lambda$ and $\lambda^{s},\lambda^{d} \geq 0$ for all $(i,j) \in\mathcal{E}$.
\end{itemize}
 {Thus, the ADMM updates are local and require only neighbor-to-neighbor communication; see \cite[Section 3]{zhang2019consensus}.}

\section{A Distributed Accelerated Preconditioned ADMM for QROT on Bipartite Graphs}\label{sec:DAP-ADMM}

In this section, we develop a distributed algorithm for solving problem \eqref{eq:D-GOT} based on the accelerated preconditioned ADMM (denoted by AP-ADMM), which is introduced in \cite[Algorithm 3.1]{syzz2025accelerating}. First, we introduce two stacks of local penalty parameters: $\bm{\sigma}^{s} = [\sigma_{1}^{s},\cdots,\sigma_{m}^{s}]$ for each supply node $i\in\mathcal{I}$, and $\bm{\sigma}^{d} = [\sigma_{1}^{d},\cdots,\sigma_{n}^{d}]$ for each demand node $j\in\mathcal{J}$, respectively. Then, denoting $\bm{\sigma} = [\bm{\sigma}^{s};\bm{\sigma}^{d}]$, the associated augmented Lagrangian function of problem \eqref{eq:D-GOT} with local penalty parameters is given by, for any $(\bm{\pi},\bm{s},\bm{d},\bm{\theta},\bm{\xi})\in\mathbb{R}^{|\mathcal{E}|}\times\mathbb{R}^{|\mathcal{E}|}\times\mathbb{R}^{|\mathcal{E}|}\times\mathbb{R}^{|\mathcal{E}|}\times\mathbb{R}^{|\mathcal{E}|}$:
\begin{equation}
	\begin{aligned}
		\mathcal{L}_{\bm{\sigma}}\big(\bm{\pi}, \bm{s}, \bm{d};\bm{\theta},\bm{\xi}\big) = & \; G_{1}(\bm{s}) + G_{2}(\bm{d}) \\
		& + \langle\bm{\theta},\bm{s}-\bm{\pi}\rangle + \langle\bm{\xi},\bm{\pi}-\bm{d}\rangle \\
		& + \frac{1}{2}\big\|\bm{s}-\bm{\pi}\big\|_{\Sigma_{1}}^{2} + \frac{1}{2}\big\|\bm{\pi}-\bm{d}\big\|_{\Sigma_{2}}^{2}, 
	\end{aligned}
\end{equation}
where 
\begin{equation}\label{eq:def_Sigma}
	\Sigma_{1} = \operatorname{Diag}(A_{1}^{\top}\bm{\sigma}^{s}), \quad \Sigma_{2} = \operatorname{Diag}(A_{2}^{\top}\bm{\sigma}^{d}),
\end{equation}
and $A_{1}\in\mathbb{R}^{m\times|\mathcal{E}|}$, $A_{2}\in\mathbb{R}^{n\times|\mathcal{E}|}$ are the incidence matrices associated with the bipartite graph $\mathcal{G}$ given by
\begin{equation*}
	\begin{aligned}
		\left[A_1\right]_{(i,(i,j))} &=\begin{cases}
			1, & \text { if } (i, j)\in\mathcal{E} \text { and } i \in \mathcal{I}, \\
			0, & \text { otherwise, }
		\end{cases} \\
		\left[A_2\right]_{(j,(i,j))} &= \begin{cases}
			1, & \text { if } (i, j)\in\mathcal{E} \text { and } j \in \mathcal{J}, \\
			0, & \text { otherwise. }\end{cases}
	\end{aligned}
\end{equation*}
Then, the AP-ADMM  for solving problem \eqref{eq:D-GOT} is now readily presented in Algorithm \ref{alg:acc-preADMM}.
\begin{algorithm}[h]
	\caption{AP-ADMM for solving \eqref{eq:D-GOT}}\label{alg:acc-preADMM}
	\begin{algorithmic}
		\STATE \textbf{Input:} Initialize $\bm{w}^{0}=(\bm{\pi}^{0}, \bm{s}^{0}, \bm{d}^{0}, \bm{\theta}^{0},\bm{\xi}^{0}) \in\mathbb{R}^{|\mathcal{E}|}\times\mathrm{dom}{G_1}\times\mathrm{dom}{G_2}\times\mathbb{R}^{|\mathcal{E}|}\times\mathbb{R}^{|\mathcal{E}|}$ and let $\widehat{\bm{w}}^{0}:=\bm{w}^{0}$. Set parameters $\bm{\sigma} = [\bm{\sigma}^{s}; \bm{\sigma}^{d}] \in\mathbb{R}_{++}^{m+n}$ and $\rho\in(0,2]$. Set $k = 0$.
		\WHILE{a termination criterion is not met,}
		\STATE \textbf{Step 1.} Compute
		\begin{equation}\label{eq:DAP-ADMM-SD}
			\big(\widebar{\bm{s}}^{k},\widebar{\bm{d}}^{k}\big) 
			= \mathop{\arg\min}\limits_{\bm{s},\bm{d}\in\mathbb{R}^{|\mathcal{E}|}} ~
			\mathcal{L}_{\bm{\sigma}}\big(\bm{\pi}^{k},\bm{s},\bm{d};\bm{\theta}^{k}, \bm{\xi}^{k}\big).
		\end{equation}
		
		\STATE \textbf{Step 2.} Set 
		\begin{align*}
			\widebar{\bm{\theta}}^{k} &= \bm{\theta}^{k} + \Sigma_{1}\big(\widebar{\bm{s}}^{k}-\bm{\pi}^{k}\big), \\
			\widebar{\bm{\xi}}^{k} &= \bm{\xi}^{k}+\Sigma_{2}\big(\bm{\pi}^{k}-\widebar{\bm{d}}^{k}\big).
		\end{align*}
		
		\STATE \textbf{Step 3.} Compute
		\begin{equation}\label{eq:DAP-ADMM-X}
			\widebar{\bm{\pi}}^{k} = \mathop{\arg\min}\limits_{\bm{\pi}\in\mathbb{R}^{|\mathcal{E}|}} ~
			\mathcal{L}_{\bm{\sigma}}\big(\bm{\pi}, \widebar{\bm{s}}^{k}, \widebar{\bm{d}}^{k};
			\widebar{\bm{\theta}}^{k}, \widebar{\bm{\xi}}^{k}\big).
		\end{equation}
		
		\STATE \textbf{Step 4.} Set $\widehat{\bm{w}}^{k+1} = \bm{w}^{k} + \rho\big(\widebar{\bm{w}}^{k}-\bm{w}^{k}\big)$. \vspace{1.2mm}
		
		\STATE \textbf{Step 5.} Set $\bm{w}^{k+1} = \frac{1}{k+2}\bm{w}^{0} + \frac{k+1}{k+2}\widehat{\bm{w}}^{k+1}$. \vspace{1.2mm}
		
		\STATE \textbf{Step 6.} Set $k \leftarrow k+1$ and go to \textbf{Step 1}.
		\ENDWHILE
		\STATE \textbf{Output:} $\widebar{\bm{w}}^{k}=\big(\widebar{\bm{\pi}}^{k}, \widebar{\bm{s}}^{k}, \widebar{\bm{d}}^{k};
		\widebar{\bm{\theta}}^{k}, \widebar{\bm{\xi}}^{k}\big)$.
	\end{algorithmic}
\end{algorithm}

\subsection{DAP-ADMM: Distributed Implementations of AP-ADMM}\label{sec:implementations}

All operations in Algorithm \ref{alg:acc-preADMM} are distributed: each supply node $i\in\mathcal{I}$ and demand node $j\in\mathcal{J}$ exchanges information only with its neighbors, and the computations are performed locally.  We now describe the distributed implementation of each computational step in Algorithm \ref{alg:acc-preADMM} in detail.
\begin{itemize}[labelsep=-1em]
\item[\textbf{Step}] \;\quad \textbf{1.} The updates of $\big(\widebar{\bm{s}}^{k},\widebar{\bm{d}}^{k}\big)$ is obtained by solving ADMM-subproblem \eqref{eq:DAP-ADMM-SD}, which is formulated as
\begin{equation*}
\begin{aligned}
& \min\limits_{\bm{s},\bm{d}\in\mathcal{R}^{|\mathcal{E}|}} ~
\mathcal{L}_{\bm{\sigma}}\big(\bm{\pi}^{k},\bm{s},\bm{d};\bm{\theta}^{k}, \bm{\xi}^{k}\big) \\
= & \min\limits_{\bm{s},\bm{d}\in\mathcal{R}^{|\mathcal{E}|}} ~ \left\{\begin{array}{l}\displaystyle
G_{1}(\bm{s}) + G_{2}(\bm{d})	\\
\; + \frac{1}{2}\big\|\bm{s}-\bm{\pi}^{k}+\Sigma_{1}^{-1}\bm{\theta}^{k}\big\|_{\Sigma_{1}}^{2} \\
\; + \frac{1}{2}\big\|\bm{\pi}^{k}-\bm{d}+\Sigma_{2}^{-1}\bm{\xi}^{k}\big\|_{\Sigma_{2}}^{2}
\end{array}\right\},
\end{aligned}
\end{equation*}
and the joint update is separable over $\{\bm{s}_{(i,:)}\}_{i\in\mathcal{I}}$ and $\{\bm{d}_{(:,j)}\}_{j\in\mathcal{J}}$. Specifically, supply node $i\in\mathcal{I}$ computes
\begin{equation}\label{eq:DAP-ADMM-S}
\begin{aligned}
& \widebar{\bm{s}}_{(i,:)}^{k} = \mathop{\arg\min}\limits_{\bm{s}_{(i,:)}\in\mathcal{U}_{i}} \left\{\begin{array}{l}
\sum_{j \in \mathcal{J}_{i}}c_{ij}^{s}s_{ij} + \frac{\lambda^{s}}{2}s_{ij}^{2} \\
+ \frac{\sigma_{i}^{s}}{2}\big\|\bm{s}_{(i,:)} - \bm{\pi}_{(i,:)}^{k} + \frac{\bm{\theta}_{(i,:)}^{k}}{\sigma_{i}^{s}}\big\|^{2}
\end{array}\right\},
\end{aligned}
\end{equation}
where $\sigma_{i}^{s}$ is the local penalty parameter, and
\begin{equation*}
\textstyle\mathcal{U}_{i}:=\left\{\bm{s}_{(i,:)} \mid s_{i j} \geq 0, l_i^s \leq\sum_{j \in \mathcal{J}_{i}} s_{i j} \leq u_i^s\right\},
\end{equation*}
is the slice of $\mathcal{U}$ associated with supply node $i\in\mathcal{I}$. Demand node $j\in\mathcal{J}$ analogously computes $\widebar{\bm{d}}_{(:,j)}^{k}$ by
\begin{equation}\label{eq:DAP-ADMM-D}
\begin{aligned}
& \widebar{\bm{d}}_{(:,j)}^{k} = \mathop{\arg\min}\limits_{\bm{d}_{(:,j)}\in\mathcal{V}_{j}} \left\{\begin{array}{l}
\sum_{i \in \mathcal{I}_{j}}c_{ij}^{d}d_{ij} + \frac{\lambda^{d}}{2}d_{ij}^{2} \\
+ \frac{\sigma_{j}^{d}}{2}\big\|\bm{\pi}_{(:,j)}^{k} - \bm{d}_{(:,j)} + \frac{\bm{\xi}_{(:,j)}^{k}}{\sigma_{j}^{d}}\big\|^{2}
\end{array}\right\},
\end{aligned}
\end{equation}
where $\sigma_{j}^{d}$ is the local penalty parameter, and 
\begin{equation*}
\textstyle\mathcal{V}_{j}:=\left\{\bm{d}_{(:,j)} \mid d_{i j} \geq 0, l_j^d \leq\sum_{i \in \mathcal{I}_{j}}  {d_{i j}} \leq u_j^d\right\},
\end{equation*}
is the slice of $\mathcal{V}$ associated with demand node $j\in\mathcal{J}$. These local linear-quadratic subproblems are the main per-node computational cost. 
	
\item[\textbf{Step}] \;\quad \textbf{2.} The updates of Lagrangian multipliers $\widebar{\bm{\theta}}^{k}$ and $\widebar{\bm{\xi}}^{k}$ are naturally separable: supply node $i\in\mathcal{I}$ updates $\widebar{\bm{\theta}}_{(i,:)}^{k}$ locally
\begin{equation}\label{eq:DAP-ADMM-Theta}
    \widebar{\bm{\theta}}_{(i,:)}^{k} = \bm{\theta}_{(i,:)}^{k}+\sigma_{i}^{s}\big(\widebar{\bm{s}}_{(i,:)}^{k}-\bm{\pi}_{(i,:)}^{k}\big), 
\end{equation}
Demand node $j\in\mathcal{J}$ updates $\widebar{\bm{\xi}}_{(:,j)}^{k}$ locally
\begin{equation}\label{eq:DAP-ADMM-Xi}
    \widebar{\bm{\xi}}_{(:,j)}^{k} = \bm{\xi}_{(:,j)}^{k}+\sigma_{j}^{d}\big(\bm{\pi}_{(:,j)}^{k}-\widebar{\bm{d}}_{(:,j)}^{k}\big). 
\end{equation}	
	
\item[\textbf{Step}] \;\quad \textbf{3.} The update of $\widebar{\bm{\pi}}^{k}$ is obtained by solving ADMM-subproblem \eqref{eq:DAP-ADMM-X}, which admits a closed-form solution
\begin{equation}\label{eq:DAP-ADMM-X2}
\widebar{\bm{\pi}}^{k} = (\Sigma_{1}+\Sigma_{2})^{-1}\big(\Sigma_{1}\widebar{\bm{s}}^{k} + \Sigma_{2}\widebar{\bm{d}}^{k} + \widebar{\bm{\theta}}^{k} - \widebar{\bm{\xi}}^{k}\big).
\end{equation}
Because $\Sigma_{1}$ and $\Sigma_{2}$ are diagonal, this update is computed edge-wise:
\begin{align*}
\bar{\pi}_{ij}^{k} = \frac{\sigma_{i}^{s}\bar{s}_{ij}^{k} + \sigma_{j}^{d}\bar{d}_{ij}^{k} + \bar{\theta}_{ij}^{k} - \bar{\xi}_{ij}^{k}}{\sigma_{i}^{s}+\sigma_{j}^{d}},
\end{align*}
Each supply node collects $(\sigma_{j}^{d},\,\bar{d}_{ij},\bar{\xi}_{ij})$ from its demand neighbors and computes its row sub-vector; each demand node performs the analogous column update. Both sides obtain the same $\bar{\pi}_{ij}^{k}$ that preserves consensus.
	
\item[\textbf{Step}] \;\quad \textbf{4}-\textbf{5.} Since the over-relaxation in \textbf{Step 4} and Halpern iteration in \textbf{Step 5} involve only element-wise operations, their computations are naturally separable, too. Specifically, the computations of \textbf{Step 4}-\textbf{5} can be decomposed as follows, for every $i\in\mathcal{I}$:
	\begin{equation}\label{eq:ex-Hal-S}
		\begin{aligned}
			\widehat{\bm{w}}_{(i,:)}^{k+1} & = \bm{w}_{(i,:)}^{k} + \rho\big(\widebar{\bm{w}}_{(i,:)}^{k} - \bm{w}_{(i,:)}^{k}\big), \\
			\bm{w}_{(i,:)}^{k+1} & = \frac{1}{k+2}\bm{w}_{(i,:)}^{0} + \frac{k+1}{k+2}\widehat{\bm{w}}_{(i,:)}^{k+1},
		\end{aligned}
	\end{equation}
	and, for every $j\in\mathcal{J}$,
	\begin{equation}\label{eq:ex-Hal-D}
		\begin{aligned}
			\widehat{\bm{w}}_{(:,j)}^{k+1} & = \bm{w}_{(:,j)}^{k} + \rho\big(\widebar{\bm{w}}_{(:,j)}^{k} - \bm{w}_{(:,j)}^{k}\big), \\
			\bm{w}_{(:,j)}^{k+1} & = \frac{1}{k+2}\bm{w}_{(:,j)}^{0} + \frac{k+1}{k+2}\widehat{\bm{w}}_{(:,j)}^{k+1}.
		\end{aligned}
	\end{equation}
\end{itemize}
Based on the above implementations, we propose the distributed accelerated preconditioned ADMM, denoted as DAP-ADMM, for solving problem \eqref{eq:D-GOT} in Algorithm \ref{alg:DAP-ADMM}.

\begin{algorithm}[h]
	\caption{\textbf{DAP-ADMM} for solving \eqref{eq:D-GOT}}\label{alg:DAP-ADMM}
	\begin{algorithmic}
		\STATE \textbf{Input:} Initialize $\bm{w}^{0}=(\bm{\pi}^{0}, \bm{s}^{0}, \bm{d}^{0}, \bm{\theta}^{0},\bm{\xi}^{0}) \in\mathbb{R}^{|\mathcal{E}|}\times\mathrm{dom}{G_1}\times\mathrm{dom}{G_2}\times\mathbb{R}^{|\mathcal{E}|}\times\mathbb{R}^{|\mathcal{E}|}$ and let $\widehat{\bm{w}}^{0}:=\bm{w}^{0}$. Set parameters $\bm{\sigma} = [\bm{\sigma}^{s}; \bm{\sigma}^{d}] \in\mathbb{R}_{++}^{m+n}$ and $\rho\in(0,2]$. 
		\FOR{$k = 0, 1, 2, \ldots $,}
		\STATE  {\textbf{Parallel node updates:} In each of the following steps, all supply nodes $i\in\mathcal{I}$ and all demand nodes $j\in\mathcal{J}$ operate in parallel.}
		\STATE \textbf{Step 1.} node $i$ updates $\widebar{\bm{s}}_{(i,:)}^{k}$ via solving \eqref{eq:DAP-ADMM-S};  \\
		\qquad \quad \;\; node $j$ updates $\widebar{\bm{d}}_{(:,j)}^{k}$ via solving \eqref{eq:DAP-ADMM-D}; 
		
		\STATE \textbf{Step 2.} node $i$ updates $\widebar{\bm{\theta}}_{(i,:)}^{k}$ via \eqref{eq:DAP-ADMM-Theta}; \\
		\qquad \quad \;\; node $j$ updates $\widebar{\bm{\xi}}_{(:,j)}^{k}$ via \eqref{eq:DAP-ADMM-Xi};
		
		\STATE \textbf{Communication.} \\
		node $i$ collects $(\sigma_{j}^{d},\, \bar{d}_{ij}^{k}, \bar{\xi}_{ij}^{k})$ from its neighbors $\mathcal{J}_{i}$; node $j$ collects $(\sigma_{i}^{s},\, \bar{s}_{ij}^{k}, \bar{\theta}_{ij}^{k})$ from its neighbors $\mathcal{I}_{j}$;
		
		\STATE \textbf{Step 3.} node $i$ updates $\widebar{\bm{\pi}}_{(i,:)}^{k}$  via \eqref{eq:DAP-ADMM-X2}; \\
		\qquad \quad \;\; node $j$ updates $\widebar{\bm{\pi}}_{(:,j)}^{k}$ via \eqref{eq:DAP-ADMM-X2};
		
		\STATE \textbf{Step 4.} node $i$ computes $\widehat{\bm{w}}_{(i,:)}^{k+1}$ and $\bm{w}_{(i,:)}^{k+1}$ via \eqref{eq:ex-Hal-S}; \\
		\qquad \quad \;\; node $j$ computes $\widehat{\bm{w}}_{(:,j)}^{k+1}$ and $\bm{w}_{(:,j)}^{k+1}$ via \eqref{eq:ex-Hal-D}.
		
		\ENDFOR
		\STATE \textbf{Output:} $\widebar{\bm{w}}^{k}=\big(\widebar{\bm{\pi}}^{k}, \widebar{\bm{s}}^{k}, \widebar{\bm{d}}^{k};
		\widebar{\bm{\theta}}^{k}, \widebar{\bm{\xi}}^{k}\big)$.
	\end{algorithmic}
\end{algorithm}



\subsection{A Solver-Free Method for Solving ADMM-Subproblems }\label{sec:App-subsolver-free}
In this section, we introduce an efficient \textit{solver-free} method for solving the ADMM-subproblems \eqref{eq:DAP-ADMM-S} and \eqref{eq:DAP-ADMM-D}. For cost functions $G_{1}$ and $G_{2}$ given in \eqref{eq:lin-qua1} and \eqref{eq:lin-qua2}, the ADMM-subproblems could be rewritten as:
\begin{equation}\label{eq:DAP-ADMM-Si}
	\begin{aligned}[b]
		& \widebar{\bm{s}}_{(i,:)}^{k} = \mathop{\arg\min}\limits_{\bm{s}_{(i,:)}\in\mathcal{U}_{i}} \\
		& \textstyle\left\{\frac{\lambda^{s} + \sigma_{i}^{s}}{2}\big\|\bm{s}_{(i,:)} -  \frac{\sigma_{i}^{s}\bm{\pi}_{(i,:)}^{k} - \bm{\theta}_{(i,:)}^{k} - \bm{c}_{(i,:)}^{s} }{\lambda^{s} + \sigma_{i}^{s}}\big\|^{2}\right\},
	\end{aligned}
\end{equation}
and
\begin{equation}\label{eq:DAP-ADMM-Dj}
	\begin{aligned}[b]
		& \widebar{\bm{d}}_{(:,j)}^{k} = \mathop{\arg\min}\limits_{\bm{d}_{(:,j)}\in\mathcal{V}_{j}} \\
		& \textstyle\left\{\frac{\lambda^{d} + \sigma_{j}^{d}}{2}\big\|\bm{d}_{(:,j)} -  \frac{\sigma_{j}^{d}\bm{\pi}_{(:,j)}^{k} + \bm{\xi}_{(:,j)}^{k} - \bm{c}_{(:,j)}^{d} }{\lambda^{d} + \sigma_{j}^{d}}\big\|^{2}\right\}.
	\end{aligned}
\end{equation}
Both of them fall into the form of an interval-constrained projection problem:
\begin{equation}\label{eq:simplex-like}
	\min\limits_{\bm{x}\in\mathbb{R}^{p}} \quad \frac{\gamma}{2}\big\|\bm{x} - \widehat{\bm{x}}\big\|^{2}, \quad \mathrm{s.t.} \quad l \leq \mathbf{1}^{\top}\bm{x} \leq u, \quad \bm{x}\geq0.
\end{equation}
Taking subproblem \eqref{eq:DAP-ADMM-Si} for an arbitrary $i\in\mathcal{I}$ as an example. Let $p = |\mathcal{J}_{i}|$ and denote the parameters in \eqref{eq:simplex-like} as
\begin{align*}
	& \gamma = \lambda^{s}+\sigma_{i}^{s}, \quad l = l_{i}^{s}, \quad u = u_{i}^{s}, \\
	& \widehat{\bm{x}} = \frac{\sigma_{i}^{s} \bm{\pi}_{(i,:)}^{k} - \bm{\theta}_{(i,:)}^{k} - \bm{c}_{(i,:)}^{s}}{\lambda^{s}+\sigma_{i}^{s}},
\end{align*}
then \eqref{eq:DAP-ADMM-Si} is exactly equivalent to \eqref{eq:simplex-like}.

We now focus on the solution to problem \eqref{eq:simplex-like}. According to \cite[Corollary 28.3.1]{r1970convex}, we know that $\bm{x}^{*}\in\mathbb{R}^{p}$ is the solution to \eqref{eq:simplex-like} if and only if there exists $(\mu_1^{*}, \mu_2^{*}, \bm{\xi}^{*})\in\mathbb{R}\times\mathbb{R}\times\mathbb{R}^{p}$ such that $(\bm{x}^{*},\mu_1^{*}, \mu_2^{*}, \bm{\xi}^{*})$ satisfies the following KKT condition:
\begin{subequations}
\begin{align}
& \gamma(\bm{x}^{*}-\widehat{\bm{x}})-\mu_{1}^{*}\mathbf{1}+\mu_{2}^{*}\mathbf{1}-\bm{\zeta}^{*} = 0, \label{kkt:sta}\\
& \mathbf{1}^{\top}\bm{x}^{*} - l \geq 0, \quad u - \mathbf{1}^{\top}\bm{x}^{*} \geq 0, \quad \bm{x}^{*} \geq 0, \label{kkt:pfeas}\\
& \mu_{1}^{*} \geq 0, \quad \mu_{2}^{*} \geq 0, \quad \bm{\zeta}^{*} \geq 0, \label{kkt:dfeas}\\
& \mu_{1}^{*}(\mathbf{1}^{\top}\bm{x}^{*} - l) = 0, \quad \mu_{2}^{*}(\mathbf{1}^{\top}\bm{x}^{*} - u) = 0, \label{kkt:cs1}\\
& \forall i = 1,2,\cdots,p, \quad x_{i}^{*}\zeta_{i}^{*} = 0. \label{kkt:cs2}
\end{align}
\end{subequations}
where $\mu_{1},\mu_{2}\in\mathbb{R}$ and $\bm{\zeta}\in\mathbb{R}^{p}$ are the Lagrangian multipliers with respect to the lower and upper inequality constraints and non-negative constraints, respectively. 

From the stationary condition \eqref{kkt:sta} and the complementary slackness \eqref{kkt:cs2}, we deduce the following relationships for each coordinate $i = 1,2,\cdots,p$:
\begin{itemize}
	\item if $x_{i}^{*}>0$, then $\zeta_{i}^{*} = 0$ and $x_{i}^{*} = \widehat{x}_{i} + \gamma^{-1}(\mu_{1}^{*}-\mu_{2}^{*}) > 0$;
	\item if $x_{i}^{*}=0$, then $\zeta_{i}^{*} \geq 0$ and $x_{i}^{*} = \widehat{x}_{i} + \gamma^{-1}(\mu_{1}^{*}-\mu_{2}^{*}) + \gamma^{-1}\zeta_{i}^{*} = 0$.
\end{itemize}
Thus, the relationship between $\bm{x}$ and $\mu$ can be concluded as:
\begin{equation*}\label{eq:x-mu}
	\bm{x}(\mu) = \max\{\widehat{\bm{x}} + \mu\mathbf{1}, 0\}, ~~ \text{where} ~~ \mu=\gamma^{-1}(\mu_{1}-\mu_{2}).
\end{equation*}
Then, the optimal $\bm{x}^{*}$ could be obtained by identify a $\mu^{*}\in\mathbb{R}$ such that $l \leq \mathbf{1}^{\top}\bm{x}(\mu^{*}) \leq u$. To find such a $\mu^{*}$, we first evaluate $\mathbf{1}^{\top}\bm{x}(0)$ to determine whether it lies in $[l,u]$:
\begin{itemize}
\item If $\mathbf{1}^{\top}\bm{x}(0)\in[l,u]$, then $\bm{x}^{*}=\max\{\widehat{\bm{x}},0\}$ and $\mu^{*}=\mu_{1}^{*}=\mu_{2}^{*}=0$.
\item If $\mathbf{1}^{\top}\bm{x}(0)<l$, the lower bound is active, with $\mu^{*}>0$, $\mu_{1}^{*}>0$, and $\mu_{2}^{*}=0$; then \eqref{eq:simplex-like} reduces to the equality-constrained simplex projection
\begin{equation}\label{eq:simplex-l}
    \min\limits_{\bm{x}\in\mathbb{R}^{p}} \quad \frac{\gamma}{2}\big\|\bm{x} - \widehat{\bm{x}}\big\|^{2}, \quad \mathrm{s.t.} \quad \mathbf{1}^{\top}\bm{x} = l, \quad \bm{x}\geq0,
\end{equation}
We solve this simplex projection with the sorting-based method of \cite{held1974validation}; related algorithms are reviewed in \cite{condat2016fast}. It computes $\mu^{*}$ by:
\begin{quote}
\begin{verbatim}
temp = cumsum(sort(x_hat, ...
1, 'descend'), 1);
mu = - max((temp - l) ./ (1:p)');
\end{verbatim}
\end{quote}
The resulting $\bm{x}^{*}$ is computed by $\bm{x}^{*}=\max\{\widehat{\bm{x}}+\mu^{*}\mathbf{1},0\}$, with $\mu_{1}^{*}=\gamma\mu^{*}$ and $\mu_{2}^{*}=0$. The sorting step dominates the computation and has $O(p\log p)$ arithmetic complexity.
\item If $\mathbf{1}^{\top}\bm{x}(0)>u$, the upper bound is active, with $\mu^{*}<0$, $\mu_{1}^{*}=0$, and $\mu_{2}^{*}>0$; then \eqref{eq:simplex-like} reduces to
\begin{equation}\label{eq:simplex-u}
    \min\limits_{\bm{x}\in\mathbb{R}^{p}} \quad \frac{\gamma}{2}\big\|\bm{x} - \widehat{\bm{x}}\big\|^{2}, \quad \mathrm{s.t.} \quad \mathbf{1}^{\top}\bm{x} = u, \quad \bm{x}\geq0.
\end{equation}
\end{itemize}
Finally, we summarize the entire procedure for solving problem \eqref{eq:simplex-like} in Algorithm \ref{alg:simplex-like}.

\begin{algorithm}[h]
\caption{Solver-free method for solving problem \eqref{eq:simplex-like}}\label{alg:simplex-like}
\begin{algorithmic}
\STATE \textbf{Input:} the modulus $\gamma > 0$, the vector $\widehat{\bm{x}}\in\mathbb{R}^{p}$, and the lower and upper bounds $l$ and $u$.
\STATE \quad \textbf{Step 1.} compute $\bm{x}(0)$ and check the value of $\mathbf{1}^{\top}\bm{x}(0)$;
\STATE \quad \textbf{Step 2a.} if $\mathbf{1}^{\top}\bm{x}(0) \in [l,u]$, then set $\bm{x}^{*} = \max\{\widehat{\bm{x}}, 0\}$ and set $\mu_{1}^{*}=\mu_{2}^{*}=0$;
\STATE \quad \textbf{Step 2b.} if $\mathbf{1}^{\top}\bm{x}(0) < l$, then compute $(\bm{x}^{*},\mu_{1}^{*})$ by solving \eqref{eq:simplex-l} and set $\mu_{2}^{*}=0$;
\STATE \quad \textbf{Step 2c.} if $\mathbf{1}^{\top}\bm{x}(0) > u$, then compute $(\bm{x}^{*},\mu_{2}^{*})$ by solving \eqref{eq:simplex-u} and set $\mu_{1}^{*}=0$;
\STATE \textbf{Output:} $(\bm{x}^{*}$, $\mu_{1}^{*}$, $\mu_{2}^{*})$.
\end{algorithmic}
\end{algorithm}

\subsection{Convergence of DAP-ADMM for Solving \eqref{eq:D-GOT}}\label{sec:thms}
In this section, we analyze the convergence properties of DAP-ADMM based on the theories developed in \cite{syzz2025accelerating}. To this end, we first identify the dual problem of \eqref{eq:D-GOT} as
\begin{equation}\label{eq:D-GOT-dual}
\begin{aligned}
\min_{\bm{\theta},\bm{\xi}\in\mathbb{R}^{|\mathcal{E}|}} \quad G_{1}^{*}(-\bm{\theta}) + G_{2}^{*}(\bm{\xi}) \quad \mathrm{ s.t. } \quad \bm{\theta} = \bm{\xi},
\end{aligned}
\end{equation}
where $G_{1}^{*}$ and $G_{2}^{*}$ are the convex conjugates of $G_{1}$ and $G_{2}$. For notational convenience, we denote
\begin{equation*}
	\begin{aligned}
		\bm{w} &:= \big(\bm{\pi},\bm{s},\bm{d},\bm{\theta},\bm{\xi}\big), \\
		\mathbb{W} &:= \mathbb{R}^{|\mathcal{E}|}\times\mathbb{R}^{|\mathcal{E}|}\times\mathbb{R}^{|\mathcal{E}|}\times\mathbb{R}^{|\mathcal{E}|}\times\mathbb{R}^{|\mathcal{E}|}.
	\end{aligned}
\end{equation*}
Then, it follows from \cite[Corollary 28.3.1]{r1970convex} that $(\bm{\pi}^{*}, \bm{s}^*, \bm{d}^*) \in \mathbb{R}^{|\mathcal{E}|}\times\mathbb{R}^{|\mathcal{E}|}\times\mathbb{R}^{|\mathcal{E}|}$ is optimal to problem \eqref{eq:D-GOT} if and only if there exists $(\bm{\theta}^*,\bm{\xi}^*) \in \mathbb{R}^{|\mathcal{E}|}\times\mathbb{R}^{|\mathcal{E}|}$ such that $\big(\bm{\pi}^{*},\bm{s}^*,\bm{d}^*,\bm{\theta}^*,\bm{\xi}^*\big)$ satisfies the following KKT system:
\begin{equation}\label{eq:KKTsys}
	\begin{aligned}[b]
		\bm{0} &= -\bm{\theta}^*+\bm{\xi}^*, \\
		\bm{0} &\in\partial G_{1}(\bm{s}^*) + \bm{\theta}^*, \\
		\bm{0} &\in\partial G_{2}(\bm{d}^*) - \bm{\xi}^*, \\
		\bm{0} &= \bm{\pi}^{*} - \bm{s}^*, \\
		\bm{0} &= \bm{d}^* - \bm{\pi}^{*},
	\end{aligned}
\end{equation}
where $\partial G_{1}$ and $\partial G_{2}$ are the subdifferential mappings of functions $G_{1}$ and $G_{2}$, respectively. Then, solving problem \eqref{eq:D-GOT} is equivalent to finding $\bm{w}\in\mathbb{W}$ such that $\bm{0}\in\mathcal{T}(\bm{w})$, where the maximal monotone operator $\mathcal{T}$ is defined, for every $\bm{w} = (\bm{\pi},\bm{s},\bm{d},\bm{\theta},\bm{\xi})\in\mathbb{W}$, by
\begin{equation}\label{eq:KKTmap-GOT}
	\mathcal{T}(\bm{w}) = \begin{pmatrix}
		-\bm{\theta}+\bm{\xi} \\
		\partial G_{1}(\bm{s}) + \bm{\theta} \\
		\partial G_{2}(\bm{d}) - \bm{\xi} \\
		\bm{\pi} - \bm{s} \\
		\bm{d} - \bm{\pi}
	\end{pmatrix}.
\end{equation}
Now, we make the following assumption:
\begin{assumption}\label{asp:KKT-nonempty}
	The solution set to the KKT system \eqref{eq:KKTsys} is nonempty.
\end{assumption}
This assumption is very mild and standard in the analysis of convex constrained optimization. Then, the global convergence of Algorithm \ref{alg:DAP-ADMM} is stated as follows:

\begin{theorem}[\text{\cite[Corollary 3.5]{syzz2025accelerating}}]\label{thm:acc-preADMM-convergence}
Suppose that Assumption \ref{asp:KKT-nonempty} holds. The sequence $\{\widebar{\bm{w}}^{k}=(\widebar{\bm{\pi}}^{k},\widebar{\bm{s}}^{k},\widebar{\bm{d}}^{k},\widebar{\bm{\theta}}^{k},\widebar{\bm{\xi}}^{k})\}$ generated by Algorithm \ref{alg:DAP-ADMM} converges to $\bm{w}^{*}=({\bm{\pi}}^{*},{\bm{s}}^{*},{\bm{d}}^{*},{\bm{\theta}}^{*},{\bm{\xi}}^{*})$ satisfying \eqref{eq:KKTsys}. Then $(\bm{\pi}^{*},\bm{s}^{*},\bm{d}^{*})$ and $(\bm{\theta}^{*},\bm{\xi}^{*})$ solve \eqref{eq:D-GOT} and \eqref{eq:D-GOT-dual}, respectively.
\end{theorem}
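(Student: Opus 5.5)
The plan is to obtain the theorem by checking that problem \eqref{eq:D-GOT}, with the local penalties $\bm{\sigma}$, fits the framework of \cite[Algorithm 3.1]{syzz2025accelerating}. Then \cite[Corollary 3.5]{syzz2025accelerating} can be quoted directly. Since Algorithm \ref{alg:DAP-ADMM} is only a node-wise rewriting of Algorithm \ref{alg:acc-preADMM} (Section \ref{sec:implementations}), it suffices to treat Algorithm \ref{alg:acc-preADMM}.

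\textbf{Step 1 (identify the template).} I would cast \eqref{eq:D-GOT} as a two-block problem $\min_{\bm{y},\bm{\pi}} f(\bm{y})+g(\bm{\pi})$ subject to $\mathcal{A}\bm{y}+\mathcal{B}\bm{\pi}=0$, with the choices
\begin{equation*}
\bm{y}=(\bm{s},\bm{d}),\quad f(\bm{y})=G_1(\bm{s})+G_2(\bm{d}),\quad g\equiv 0,\quad \mathcal{A}=I,\quad \mathcal{B}\bm{\pi}=(-\bm{\pi},\bm{\pi}).
\end{equation*}
The constraint residual $(\bm{s}-\bm{\pi},\,\bm{\pi}-\bm{d})$ is weighted by the positive definite diagonal matrix $\operatorname{Diag}(\Sigma_1,\Sigma_2)$. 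I would then verify two facts about the functions and operators:
\begin{itemize}
\item $G_1,G_2$ are proper, closed and convex, because $\mathcal{U},\mathcal{V}$ are nonempty closed convex polyhedra (assumed nonempty via Assumption \ref{asp:KKT-nonempty}).
\item $\mathcal{B}^{\top}\operatorname{Diag}(\Sigma_1,\Sigma_2)\mathcal{B}=\Sigma_1+\Sigma_2\succ 0$. This makes the $\bm{\pi}$-subproblem \eqref{eq:DAP-ADMM-X} uniquely solvable; its solution is \eqref{eq:DAP-ADMM-X2}.
\end{itemize}
Likewise, the $(\bm{s},\bm{d})$-subproblem is strongly convex, so $\widebar{\bm{w}}^k$ is well defined.

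\textbf{Step 2 (match the iteration).} Next I would check that Steps 1--3 coincide with the preconditioned ADMM operator of \cite{syzz2025accelerating}, where the penalty $\sigma$ is replaced by the metric $\operatorname{Diag}(\Sigma_1,\Sigma_2)$ and the proximal terms are zero. Steps 4--5 are exactly the relaxed Halpern iteration with $\rho\in(0,2]$ and anchor $\bm{w}^0$.

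The key structural fact I rely on is this. The map $\bm{w}^k\mapsto\widebar{\bm{w}}^k$ is the resolvent $(\mathcal{M}+\mathcal{T})^{-1}\mathcal{M}$ of the KKT operator $\mathcal{T}$ in \eqref{eq:KKTmap-GOT}. The preconditioner $\mathcal{M}$ is positive semidefinite and is built from $\Sigma_1$, $\Sigma_2$, $\Sigma_1^{-1}$ and $\Sigma_2^{-1}$. I expect this verification to be the main obstacle. It involves two tasks:
\begin{itemize}
\item Write $\mathcal{M}$ explicitly and show it is positive semidefinite. It will have the standard ADMM structure: zero on the $(\bm{s},\bm{d})$ blocks and a $2\times2$ block coupling $\bm{\pi}$ with the multipliers.
\item Check that the corresponding degenerate proximal point step reproduces Steps 1--3.
\end{itemize}
The reason this needs care is that $\mathcal{M}$ is singular. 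The convergence claims in \cite{syzz2025accelerating} handle this by working in the $\mathcal{M}$-seminorm and then recovering convergence of $\widebar{\bm{w}}^k$ from that of $\bm{w}^k$. I would use the fact that the diagonal weights only rescale coordinates edge-wise, so they reduce to the scalar case after a change of variables $\bm{w}\mapsto \operatorname{Diag}(\cdot)^{1/2}\bm{w}$.

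\textbf{Step 3 (conclude).} Under Assumption \ref{asp:KKT-nonempty} we have $\mathcal{T}^{-1}(0)\neq\emptyset$. \cite[Corollary 3.5]{syzz2025accelerating} then gives $\widebar{\bm{w}}^k\to\bm{w}^*$ with $0\in\mathcal{T}(\bm{w}^*)$, i.e.\ $\bm{w}^*$ satisfies \eqref{eq:KKTsys}.

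Finally, by \cite[Corollary 28.3.1]{r1970convex}, the KKT system \eqref{eq:KKTsys} characterizes the following:
\begin{itemize}
\item optimality of $(\bm{\pi}^*,\bm{s}^*,\bm{d}^*)$ for \eqref{eq:D-GOT};
\item optimality of $(\bm{\theta}^*,\bm{\xi}^*)$ for the dual \eqref{eq:D-GOT-dual}, using $-\bm{\theta}^*\in\partial G_1(\bm{s}^*)\iff \bm{s}^*\in\partial G_1^*(-\bm{\theta}^*)$ (and similarly for $G_2$) together with $\bm{\theta}^*=\bm{\xi}^*$.
\end{itemize}
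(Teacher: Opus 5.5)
Your proposal is correct and structurally matches the paper's proof. It uses the same two-block splitting: your $(\mathcal{A},\mathcal{B})$ play the roles of $(B_2,B_1)$ in \eqref{eq:temp-ADMM}. It identifies Steps 1--3 with $(\mathcal{M}+\mathcal{T})^{-1}\mathcal{M}$ for the $\mathcal{M}$ of \eqref{eq:precond}, as the paper does. It also makes the same positivity checks, $\mathcal{B}^{\top}\Sigma\mathcal{B}=\Sigma_1+\Sigma_2\succ0$ and $\mathcal{A}^{\top}\Sigma\mathcal{A}=\Sigma\succ0$. The paper uses these in Lemma \ref{lem:Lip-cons} to make $(\mathcal{M}+\mathcal{T})^{-1}$ single-valued and Lipschitz, which is what allows passing from $\mathcal{M}$-seminorm convergence of $\bm{w}^k$ to convergence of $\widebar{\bm{w}}^k$.

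You differ in how you handle the node-dependent penalties. The paper simply asserts that the scalar-$\sigma$ proofs of \cite{syzz2025accelerating} go through with $\sigma B_i$ replaced by $\Sigma B_i$. You instead reduce to the scalar case, so the published result applies verbatim; this is the cleaner justification.

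You should make that reduction precise, though. It must rescale only the constraint rows and the multipliers, not $\bm{w}$ as a whole. Each $\pi_{ij}$ carries two different weights, $\sigma_i^s$ and $\sigma_j^d$, and no primal rescaling can absorb both. Take $\tilde{\mathcal{A}}=\Sigma^{1/2}\mathcal{A}$, $\tilde{\mathcal{B}}=\Sigma^{1/2}\mathcal{B}$, $\tilde{\bm{x}}=\Sigma^{-1/2}(\bm{\theta},\bm{\xi})$, and $\sigma=1$. Then the following hold:
\begin{itemize}
\item the augmented Lagrangian, the multiplier update, the KKT set, and the linear Steps 4--5 are preserved;
\item the scalar hypotheses become exactly your two positivity checks;
\item the scalar preconditioner's seminorm pulls back to the $\mathcal{M}$-seminorm of \eqref{eq:precond}.
\end{itemize}

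Also fix a sign slip. With $\mathcal{A}=I$ and $\mathcal{B}\bm{\pi}=(-\bm{\pi},\bm{\pi})$, your constraint reads $\bm{d}=-\bm{\pi}$. You need $\mathcal{A}(\bm{s},\bm{d})=(\bm{s},-\bm{d})$, i.e.\ $\mathcal{A}=B_2$; this leaves both positivity checks unchanged.
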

\begin{proof}
	See Appendix \ref{sec:App-proofs}.
\end{proof}


We next establish the convergence rate of Algorithm \ref{alg:DAP-ADMM}. To this end, we introduce the KKT \textit{residual mapping} associated with system \eqref{eq:KKTsys}, for every $\bm{w}=(\bm{\pi},\bm{s},\bm{d},\bm{\theta},\bm{\xi})\in\mathbb{W}$:
\begin{equation}\label{eq:KKTres}
\mathcal{R}(\bm{w}):=\begin{pmatrix}
\bm{\xi}-\bm{\theta} \\
\bm{s}-\mathtt{prox}_{G_{1}}(\bm{s}-\bm{\theta}) \\
\bm{d}-\mathtt{prox}_{G_{2}}(\bm{d}+\bm{\xi}) \\
\bm{\pi} - \bm{s} \\
\bm{d} - \bm{\pi}
\end{pmatrix}.
\end{equation}
Clearly, a point ${\bm{w}}^{*}$ satisfies the KKT system \eqref{eq:KKTsys} if and only if $\mathcal{R}(\bm{w}^*) = \bm{0}$. Then, we introduce a self-adjoint positive semi-definite linear operator $\mathcal{M}:\mathbb{W}\to\mathbb{W}$ by
\begin{equation}\label{eq:precond}
\mathcal{M}:= \begin{bmatrix}
\Sigma_{1} + \Sigma_{2} & O & O & -I_{|\mathcal{E}|} & I_{|\mathcal{E}|} \\
O & O & O & O & O \\
O & O & O & O & O \\
-I_{|\mathcal{E}|} & O & O & \Sigma_{1}^{-1} & O \\
I_{|\mathcal{E}|} & O & O & O & \Sigma_{2}^{-1} \\
\end{bmatrix}.
\end{equation}
Finally, to evaluate the primal objective function value error, we introduce the following gap function
\begin{equation}\label{eq:gap-fun}
\begin{aligned}[b]
h(\widebar{\bm{\pi}}^{k},\widebar{\bm{s}}^{k},\widebar{\bm{d}}^{k}) &:= G_{1}(\widebar{\bm{s}}^{k}) + G_{2}(\widebar{\bm{d}}^{k}) \\
&\quad {}- G_{1}({\bm{s}}^{*}) - G_{2}({\bm{d}}^{*}), \quad k\geq0,
\end{aligned}
\end{equation}
where $\{(\widebar{\bm{\pi}}^{k},\widebar{\bm{s}}^{k},\widebar{\bm{d}}^{k})\}$ is the sequence generated by Algorithm \ref{alg:DAP-ADMM} and $({\bm{\pi}}^{*},{\bm{s}}^{*},{\bm{d}}^{*})$ is its limit. Based on these notations, we state the convergence rate of Algorithm \ref{alg:DAP-ADMM} in terms of both KKT residual and primal objective function value gap:
\begin{theorem}\label{thm:acc-preADMM-rate}
Suppose that Assumption \ref{asp:KKT-nonempty} holds. Let $\{\widebar{\bm{w}}^{k}\}=\{(\widebar{\bm{\pi}}^{k},\widebar{\bm{s}}^{k},\widebar{\bm{d}}^{k},\widebar{\bm{\theta}}^{k},\widebar{\bm{\xi}}^{k})\}$ be the sequence generated by Algorithm \ref{alg:DAP-ADMM}, and let ${\bm{w}}^{*}=({\bm{\pi}}^{*},{\bm{s}}^{*},{\bm{d}}^{*},{\bm{\theta}}^{*},{\bm{\xi}}^{*})$ be the limit point of the sequence $\{\widebar{\bm{w}}^{k}\}$ and $R_0=\left\|\bm{w}^0-\bm{w}^*\right\|_{\mathcal{M}}$. Then for all $k \geq 0$, we have the following bounds:
\begin{equation}\label{eq:rate-of-KKTres}
\big\|\mathcal{R}(\widebar{\bm{w}}^k)\big\| \leq\left(\frac{\lambda_{\max}(\Sigma)\sqrt{2}+1}{\sqrt{\lambda_{\min}(\Sigma)}}\right) \frac{2 R_0}{\rho(k+1)},
\end{equation}
and
\begin{equation}\label{eq:rate-of-obj-gap}
\begin{aligned}[b]
h(\widebar{\bm{\pi}}^{k},\widebar{\bm{s}}^{k},\widebar{\bm{d}}^{k}) &\geq - \frac{2 R_0\|(\bm{\theta}^{*},\bm{\xi}^{*})\|}{\sqrt{\lambda_{\min}(\Sigma)}\rho(k+1)}, \\
h(\widebar{\bm{\pi}}^{k},\widebar{\bm{s}}^{k},\widebar{\bm{d}}^{k}) &\leq \frac{2 R_0\big(\sqrt{\lambda_{\max}(\Sigma)} R_0 + \|(\bm{\theta}^{*},\bm{\xi}^{*})\|\big)}{\sqrt{\lambda_{\min}(\Sigma)}\rho(k+1)},
\end{aligned}
\end{equation}
where $\Sigma:= [\Sigma_{1}, O; O, \Sigma_{2}]$, $\Sigma_{1}, \Sigma_{2}$ are given in \eqref{eq:def_Sigma}, and $\lambda_{\max}(\Sigma)$ and $\lambda_{\min}(\Sigma)$ denote the largest and smallest eigenvalues of $\Sigma$, respectively.
\end{theorem}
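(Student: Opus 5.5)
The plan is to view Algorithm \ref{alg:DAP-ADMM} as a Halpern iteration applied to an averaged operator, and then convert the resulting fixed-point residual bound into bounds on $\mathcal{R}$ and on $h$.

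\textbf{Fixed-point reformulation.} Let $\widebar{\bm{w}}^k=\widehat{\mathcal{T}}(\bm{w}^k)$ denote the map defined by Steps 1--3. The first-order conditions of \eqref{eq:DAP-ADMM-SD} and \eqref{eq:DAP-ADMM-X}, combined with Step 2, should give the preconditioned proximal-point inclusion
\begin{equation*}
\bm{0}\in\mathcal{T}(\widebar{\bm{w}}^k)+\mathcal{M}(\widebar{\bm{w}}^k-\bm{w}^k),
\end{equation*}
with $\mathcal{T}$ from \eqref{eq:KKTmap-GOT} and $\mathcal{M}$ from \eqref{eq:precond}. This is exactly the setting of \cite{syzz2025accelerating}. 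Consequently $\widehat{\mathcal{T}}$ is firmly nonexpansive in the $\mathcal{M}$-seminorm, and $\bm{w}^{k+1}$ in Steps 4--5 is the Halpern iteration for $(1-\rho)I+\rho\widehat{\mathcal{T}}$, with $\rho\in(0,2]$.

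The Halpern rate from \cite[Theorem 3.4/Corollary 3.5]{syzz2025accelerating} then gives
\begin{equation*}
\|\bm{w}^k-\widebar{\bm{w}}^k\|_{\mathcal{M}}\le \frac{2R_0}{\rho(k+1)}.
\end{equation*}
Since $\mathcal{M}$ has zero blocks in the $(\bm{s},\bm{d})$ coordinates, this controls only $\Delta\bm{\pi}$, $\Delta\bm{\theta}$ and $\Delta\bm{\xi}$, where $\Delta\bm{\pi}=\bm{\pi}^k-\widebar{\bm{\pi}}^k$ and similarly for the multipliers. Factorizing $\mathcal{M}$ as $\mathcal{M}=\mathcal{B}^\top\Sigma^{-1}\mathcal{B}$ with
\begin{equation*}
\mathcal{B}(\bm{w})=\bigl(\bm{\theta}-\Sigma_1\bm{\pi},\ \bm{\xi}+\Sigma_2\bm{\pi}\bigr),
\end{equation*}
we get $\|\mathcal{B}\Delta\bm{w}\|\le\sqrt{\lambda_{\max}(\Sigma)}\,\|\Delta\bm{w}\|_{\mathcal{M}}$ and $\|\mathcal{B}\Delta\bm{w}\|\ge\sqrt{\lambda_{\min}(\Sigma)}\,\|\Delta\bm{w}\|_{\mathcal{M}}$. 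The second inequality is where $1/\sqrt{\lambda_{\min}(\Sigma)}$ enters.

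\textbf{KKT residual.} I will bound each block of $\mathcal{R}(\widebar{\bm{w}}^k)$ separately.
\begin{itemize}
\item Step 2 gives $\widebar{\bm{s}}^k-\bm{\pi}^k=\Sigma_1^{-1}(\widebar{\bm{\theta}}^k-\bm{\theta}^k)$ and $\bm{\pi}^k-\widebar{\bm{d}}^k=\Sigma_2^{-1}(\widebar{\bm{\xi}}^k-\bm{\xi}^k)$. Hence the blocks $\widebar{\bm{\pi}}^k-\widebar{\bm{s}}^k$ and $\widebar{\bm{d}}^k-\widebar{\bm{\pi}}^k$ are linear combinations of $\Delta\bm{\pi}$, $\Sigma_1^{-1}\Delta\bm{\theta}$ and $\Sigma_2^{-1}\Delta\bm{\xi}$.
\item The optimality condition of Step 3 gives $\widebar{\bm{\xi}}^k-\widebar{\bm{\theta}}^k=(\Sigma_1+\Sigma_2)(\bm{\pi}^k-\widebar{\bm{\pi}}^k)$ plus multiplier differences, again linear in $\Delta\bm{w}$.
\item For the prox blocks, Step 1 gives $-\bm{\theta}^k-\Sigma_1(\widebar{\bm{s}}^k-\bm{\pi}^k)\in\partial G_1(\widebar{\bm{s}}^k)$, i.e. $-\widebar{\bm{\theta}}^k\in\partial G_1(\widebar{\bm{s}}^k)$. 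So $\widebar{\bm{s}}^k=\mathtt{prox}_{G_1}(\widebar{\bm{s}}^k-\widebar{\bm{\theta}}^k)$ and this block of the residual vanishes exactly; the $G_2$ block vanishes in the same way.
\end{itemize}
Collecting the remaining linear terms and using the norm comparisons above should produce the constant $(\sqrt2\lambda_{\max}(\Sigma)+1)/\sqrt{\lambda_{\min}(\Sigma)}$ in \eqref{eq:rate-of-KKTres}.

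\textbf{Objective gap.} For the lower bound, I use the saddle-point inequality at $\bm{w}^*$: $G_1(\widebar{\bm{s}})+G_2(\widebar{\bm{d}})\ge G_1(\bm{s}^*)+G_2(\bm{d}^*)-\langle\bm{\theta}^*,\widebar{\bm{s}}-\widebar{\bm{\pi}}\rangle-\langle\bm{\xi}^*,\widebar{\bm{\pi}}-\widebar{\bm{d}}\rangle$. Cauchy--Schwarz on the feasibility residuals, bounded as above, gives the first inequality in \eqref{eq:rate-of-obj-gap}.

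For the upper bound, I use the subgradients $-\widebar{\bm{\theta}}^k$ and $\widebar{\bm{\xi}}^k$ in the convexity inequalities at $\bm{s}^*$ and $\bm{d}^*$. This gives $h\le\langle\widebar{\bm{\theta}}^k,\bm{s}^*-\widebar{\bm{s}}^k\rangle-\langle\widebar{\bm{\xi}}^k,\bm{d}^*-\widebar{\bm{d}}^k\rangle$. Rewriting via $\bm{s}^*=\bm{d}^*=\bm{\pi}^*$ leaves terms of the form (multiplier)$\times$(feasibility residual). I split $\widebar{\bm{\theta}}^k=\bm{\theta}^*+(\widebar{\bm{\theta}}^k-\bm{\theta}^*)$ and bound $\|\widebar{\bm{w}}^k-\bm{w}^*\|_{\mathcal{M}}\le R_0$ using Fej\'er-type monotonicity of the Halpern iterates; this yields the $\sqrt{\lambda_{\max}(\Sigma)}R_0+\|(\bm{\theta}^*,\bm{\xi}^*)\|$ factor.

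\textbf{Main obstacle.} The hardest step is the exact bookkeeping of the constants, especially because $\mathcal{M}$ is only semidefinite. The bound $\|\widebar{\bm{w}}^k-\bm{w}^*\|_{\mathcal{M}}\le R_0$ and the lower comparison via $\lambda_{\min}(\Sigma)$ must go through $\mathcal{B}$. I would first try to quote the corresponding general bounds in \cite{syzz2025accelerating} directly and only verify that the specific $\mathcal{B}$ and $\Sigma$ here match their hypotheses.
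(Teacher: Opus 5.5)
Your framework matches the paper's: recast Steps 1--3 as $(\mathcal{M}+\mathcal{T})^{-1}\mathcal{M}$, then quote Theorem 3.7 and Lemma 3.6 of the AP-ADMM reference. Your vanishing prox blocks and your lower bound on $h$ are correct. The gap is in the upper bound on $h$. After the subgradient inequalities and $\bm{s}^*=\bm{d}^*=\bm{\pi}^*$ you actually have
\[
h\le\big\langle\bar{\bm{\theta}}^k-\bar{\bm{\xi}}^k,\,\bm{\pi}^*-\bar{\bm{\pi}}^k\big\rangle-\big\langle\bar{\bm{\theta}}^k,\,\bar{\bm{s}}^k-\bar{\bm{\pi}}^k\big\rangle-\big\langle\bar{\bm{\xi}}^k,\,\bar{\bm{\pi}}^k-\bar{\bm{d}}^k\big\rangle,
\]
so the first term is not of the form multiplier times feasibility residual. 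Splitting $\bar{\bm{\theta}}^k=\bm{\theta}^*+(\bar{\bm{\theta}}^k-\bm{\theta}^*)$ does not help either. By your own factorization $\mathcal{M}=\mathcal{B}^\top\Sigma^{-1}\mathcal{B}$, the seminorm sees only $\mathcal{B}(\cdot)$, and $\mathcal{M}$ is singular even on the $(\bm{\pi},\bm{\theta},\bm{\xi})$ block: every $(\bm{\pi},\Sigma_1\bm{\pi},-\Sigma_2\bm{\pi})$ has zero $\mathcal{M}$-norm. Hence $\|\bar{\bm{w}}^k-\bm{w}^*\|_{\mathcal{M}}\le R_0$ gives no bound on $\|\bar{\bm{\theta}}^k-\bm{\theta}^*\|$ or $\|\bar{\bm{\pi}}^k-\bm{\pi}^*\|$ individually, and cannot produce the factor $\sqrt{\lambda_{\max}(\Sigma)}R_0$. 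The missing step is the Step 3 optimality condition $\bar{\bm{\theta}}^k-\bar{\bm{\xi}}^k=-\Sigma_1(\bar{\bm{s}}^k-\bar{\bm{\pi}}^k)+\Sigma_2(\bar{\bm{\pi}}^k-\bar{\bm{d}}^k)$. It converts the cross term into feasibility terms and gives
\[
h\le-\big\langle\mathcal{B}(\bar{\bm{w}}^k-\bm{w}^*)+(\bm{\theta}^*,\bm{\xi}^*),\,(\bar{\bm{s}}^k-\bar{\bm{\pi}}^k,\ \bar{\bm{\pi}}^k-\bar{\bm{d}}^k)\big\rangle,
\]
which is Lemma \ref{lem:obj-gap}. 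Now $\|\mathcal{B}(\bar{\bm{w}}^k-\bm{w}^*)\|\le\sqrt{\lambda_{\max}(\Sigma)}\,\|\bar{\bm{w}}^k-\bm{w}^*\|_{\mathcal{M}}\le\sqrt{\lambda_{\max}(\Sigma)}\,R_0$. The last step uses two facts: $\|\bm{w}^k-\bm{w}^*\|_{\mathcal{M}}\le R_0$, which follows by induction on the Halpern step (the iterates are bounded, not Fej\'er monotone); and $\mathcal{M}$-nonexpansiveness of $(\mathcal{M}+\mathcal{T})^{-1}\mathcal{M}$.

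Your residual bound glosses over the same degeneracy. $\|\bm{w}^k-\bar{\bm{w}}^k\|_{\mathcal{M}}$ does not control $\Delta\bm{\pi},\Delta\bm{\theta},\Delta\bm{\xi}$ separately, so writing the blocks as linear combinations of them is not enough. What makes it work is two exact identities, with $\Delta\bm{w}=\bm{w}^k-\bar{\bm{w}}^k$ and $B_1=[-I;\,I]$ from the appendix: $(\bar{\bm{\pi}}^k-\bar{\bm{s}}^k,\ \bar{\bm{d}}^k-\bar{\bm{\pi}}^k)=\Sigma^{-1}\mathcal{B}\Delta\bm{w}$ and $\bar{\bm{\xi}}^k-\bar{\bm{\theta}}^k=B_1^\top\mathcal{B}\Delta\bm{w}$. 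The factor $\lambda_{\min}(\Sigma)^{-1/2}$ then enters via $\|\Sigma^{-1}\bm{v}\|\le\lambda_{\min}(\Sigma)^{-1/2}\|\Sigma^{-1/2}\bm{v}\|$ with $\bm{v}=\mathcal{B}\Delta\bm{w}$. It does not come from your lower comparison $\|\mathcal{B}\Delta\bm{w}\|\ge\sqrt{\lambda_{\min}(\Sigma)}\|\Delta\bm{w}\|_{\mathcal{M}}$, which points the wrong way for an upper bound. You obtain the constant $\sqrt{2\lambda_{\max}(\Sigma)}+\lambda_{\min}(\Sigma)^{-1/2}$, which is at most $(\sqrt{2}\lambda_{\max}(\Sigma)+1)/\sqrt{\lambda_{\min}(\Sigma)}$. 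The resulting feasibility bound $2R_0/(\sqrt{\lambda_{\min}(\Sigma)}\rho(k+1))$ then closes both objective-gap estimates.
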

\begin{proof}
See Appendix \ref{sec:App-proofs}.
\end{proof}

\subsection{A Distributed Self-Adaptive Mechanism for Penalty Parameters}\label{sec:self-ada}

In this section, we describe a heuristic self-adaptive mechanism for the penalty parameters for the proposed DAP-ADMM. It is known that the efficiency of ADMM-type algorithms is highly sensitive to the penalty parameter. Therefore, one would prefer to adaptively update such a parameter using the global KKT residual to achieve better performance. However, the KKT residual mapping \eqref{eq:KKTres} involves variables $\bm{w}$ and functions $G_1,G_2$ distributed across multiple nodes, making it difficult to access in such a distributed system.

To embed this mechanism into decentralized ADMM, we use multiple local penalty parameters $\bm{\sigma}^{s}$ and $\bm{\sigma}^{d}$ and adjust them via \textit{local} KKT residuals. Recall that the proximal mappings for $G_1$ and $G_{2}$ can be rewritten as projections onto $\mathcal{U}$ and $\mathcal{V}$, respectively:
\begin{equation*}
	\begin{aligned}
		\mathtt{prox}_{G_{1}}({\bm{x}}) &= \Pi_{\mathcal{U}}\bigg(\frac{{\bm{x}} - \bm{c}^{s}}{1 + \lambda^{s}}\bigg), \\
		\mathtt{prox}_{G_{2}}({\bm{x}}) &= \Pi_{\mathcal{V}}\bigg(\frac{{\bm{x}} - \bm{c}^{d}}{1 + \lambda^{d}}\bigg).
	\end{aligned}
\end{equation*}
Decomposing \eqref{eq:KKTres} by node gives the local residual mappings for supply node $i\in\mathcal{I}$ and demand node $j\in\mathcal{J}$:
\begin{equation*}
	\begin{aligned}
	\mathcal{R}_{i}(\bm{w}) &= \begin{pmatrix}
		\dfrac{\sqrt{2}}{2}\big(\bm{\xi}_{(i,:)} - \bm{\theta}_{(i,:)}\big) \\[1mm]
		\bm{s}_{(i,:)} - \Pi_{\mathcal{U}_{i}}\bigg(\dfrac{\bm{s}_{(i,:)}-\bm{\theta}_{(i,:)} - \bm{c}_{(i,:)}^{s}}{1 + \lambda^{s}}\bigg) \\[1mm]
		\bm{\pi}_{(i,:)} - \bm{s}_{(i,:)}
		\end{pmatrix}, \\ 
	\mathcal{R}_{j}(\bm{w}) &= \begin{pmatrix}
		\dfrac{\sqrt{2}}{2}\big(\bm{\xi}_{(:,j)} - \bm{\theta}_{(:,j)}\big) \\[1mm]
		\bm{d}_{(:,j)} - \Pi_{\mathcal{V}_{j}}\bigg(\dfrac{\bm{d}_{(:,j)}+\bm{\xi}_{(:,j)} - \bm{c}_{(:,j)}^{d}}{1 + \lambda^{d}}\bigg) \\[1mm]
		\bm{d}_{(:,j)} - \bm{\pi}_{(:,j)}
		\end{pmatrix}.
	\end{aligned}
\end{equation*}
These mappings use only local variables, and
\begin{equation*}
	\big\|\mathcal{R}(\bm{w})\big\| = \sqrt{\sum\nolimits_{i\in\mathcal{I}}\big\|\mathcal{R}_{i}(\bm{w})\big\|^{2} + \sum\nolimits_{j\in\mathcal{J}}\big\|\mathcal{R}_{j}(\bm{w})\big\|^{2}}.
\end{equation*}
Following residual-balancing strategies \cite[Section 4.4]{lam2018fast} and \cite[Section 5.1]{yang2021fast}, consider supply node $i\in\mathcal{I}$. For $\bm{w}\in\mathbb{W}$, define
\begin{equation*}
	\begin{aligned}	
		\Delta_{p1,i}^{s}(\bm{w}) & = \bm{s}_{(i,:)} - \Pi_{\mathcal{U}_{i}}\left(\frac{\bm{s}_{(i,:)}-\bm{\theta}_{(i,:)} - \bm{c}_{(i,:)}^{s}}{1+\lambda^{s}}\right), \\
		\Delta_{p2,i}^{s}(\bm{w}) & = \frac{\sqrt{2}}{2}\left(\bm{\xi}_{(i,:)} - \bm{\theta}_{(i,:)}\right), \\
		\Delta_{p,i}^{s}(\bm{w}) & := \sqrt{\left\|\Delta_{p1,i}^{s}(\bm{w})\right\|^{2} + 
			\frac{1}{2}\left\|\Delta_{p2,i}^{s}(\bm{w})\right\|^{2}}, \\
		\Delta_{d,i}^{s}(\bm{w}) & := \left\|\bm{\pi}_{(i,:)} - \bm{s}_{(i,:)}\right\|.
	\end{aligned}
\end{equation*}
Initialize $\sigma_{i}^{s,0}=1$. If $\mathrm{mod}(k,\bar{k})=0$, compute $\chi_{i}^{s,k+1}=\Delta_{p,i}^{s}(\widebar{\bm{w}}^{k})/\Delta_{d,i}^{s}(\widebar{\bm{w}}^{k})$ and update as follows:
\begin{equation*}
	\sigma_{i}^{s, k+1} = \begin{cases}
		\eta^{-1}\sigma_{i}^{s, k}, & \text { if } \chi_{i}^{s, k+1}>2, \\
		\eta\sigma_{i}^{s, k}, & \text { if } 1/{\chi_{i}^{s, k+1}}>2, \\
		\sigma_{i}^{s, k}, & \text { otherwise.}
	\end{cases}
\end{equation*}
In the update above, $\eta>0$. We use $\bar{k}=50$ and $\eta=1.5$, and restrict $\sigma_{i}^{s,k+1}$ to $[10^{-3},10^{3}]$. More generally, if $0<\underline{\sigma}\leq\sigma_{i}^{s},\sigma_{j}^{d}\leq\overline{\sigma}<\infty$, then \eqref{eq:def_Sigma} gives $\lambda_{\min}(\Sigma)\geq\underline{\sigma}$ and $\lambda_{\max}(\Sigma)\leq\overline{\sigma}$, so the safeguards control the constants in Theorem \ref{thm:acc-preADMM-rate}. Demand nodes use the analogous update. Thus, each node adapts its penalty using only local residual information, without global residual aggregation.

\section{Numerical Experiments}\label{sec:num-exp}

In this section, we first evaluate the efficiency of the proposed solver-free method for the ADMM subproblem \eqref{eq:DAP-ADMM-SD}. We then conduct an ablation study to explore the acceleration mechanism within our DAP-ADMM (Algorithm \ref{alg:DAP-ADMM}). Finally, we compare DAP-ADMM with the distributed ADMM \cite{zhang2019consensus}, denoted by D-ADMM for solving problem \eqref{eq:D-GOT}. The primary experimental settings are described below.

\vspace{2mm}

\noindent\textbf{GOT Instance Generation:} Given integers $m,n > 1$ and a connectivity ratio $\iota\in(0,1]$, we randomly generate an undirected connected bipartite graph $\mathcal{G}(\mathcal{I},\mathcal{J},\mathcal{E})$ such that $|\mathcal{I}| = m$, $|\mathcal{J}|=n$ and $|\mathcal{E}| = \iota|\mathcal{I}|\cdot|\mathcal{J}|$. Next, we sample $\bm{\pi}^{\mathrm{feas}}\in\mathbb{R}^{|\mathcal{E}|}$ from the standard uniform distribution and normalize it to have an unit sum. Then, we compute $a_{i} = \sum_{j \in \mathcal{J}_{i}}\pi_{ij}^{\mathrm{feas}}$ for all $i\in\mathcal{I}$ and $b_{j} = \sum_{i \in \mathcal{I}_{j}}\pi_{ij}^{\mathrm{feas}}$ for all $j\in\mathcal{J}$, and set the capacity bounds as:
\begin{equation}\label{eq:lu-generate}
	\begin{aligned}
		& l_{i}^{s}=0.5a_{i},\quad u_{i}^{s}=1.5a_{i}, \quad \forall i\in\mathcal{I}, \\
		& l_{j}^{d}=0.5b_{j},\quad u_{j}^{d}=1.5b_{j}, \quad \forall j\in\mathcal{J}.
	\end{aligned}
\end{equation}
Following the methodology described in \cite[Section 5.1]{yang2025inexact}, we first generate a ``full-sized'' cost matrix $C^{\mathrm{full}}\in\mathbb{R}^{m\times n}$, and then extract entries with indices $(i,j)\in\mathcal{E}$ to formulate the cost vector $\bm{c} = [C_{ij}^{\mathrm{full}}]_{(i,j)\in\mathcal{E}}$ and set $\bm{c}^{s}=\bm{c}^{d} = {\bm{c}}/{2}$.

\vspace{2mm}

\noindent\textbf{Termination Criterion:} To ensure a fair comparison, all ADMM-type algorithms are terminated when it returns a point ${\bm{w}}^{k}=({\bm{\pi}}^{k},{\bm{s}}^{k},{\bm{d}}^{k},{\bm{\theta}}^{k},{\bm{\xi}}^{k})$ such that $\|\mathcal{R}({\bm{w}}^{k})\| < 10^{-6}$, where $\mathcal{R}(\cdot)$ is defined in \eqref{eq:KKTres}, or when the maximum of $10,000$ iterations is reached.

\subsection{Efficiency of the Solver-Free Method} 
The ADMM-subproblems \eqref{eq:DAP-ADMM-S} and \eqref{eq:DAP-ADMM-D} share the same structure and can be formulated as follow (see also problem \eqref{eq:simplex-like} for more details)
\begin{equation}\label{eq:simplex-like-2}
\begin{aligned}
\min\limits_{\bm{x}\in\mathbb{R}^{p}} & \quad {\mathtt f}(\bm{x}):=\frac{1}{2}\left\|\bm{x} - \widehat{\bm{x}} \right\|^{2}, \\
\mathrm{s.t.} & \quad l \leq \mathbf{1}^{\top}\bm{x} \leq u, \quad \bm{x}\geq0.
\end{aligned}
\end{equation}
To evaluate the efficiency of the proposed solver-free method in Algorithm \ref{alg:simplex-like}, we adopt the commercial solver Gurobi as a baseline. We generate $1000$ instances of projection problem \eqref{eq:simplex-like-2} with varying scales $p\in\{10000,\; 20000,\; 30000,\; 40000,\; 50000\}$. For each instance, $\widehat{\bm{x}}\in\mathbb{R}^{p}$ is sampled from the standard Gaussian distribution, while. an auxiliary vector $\bm{x}^{\rm feas}\in \mathbb{R}^{p}$ is sampled uniformly to set $l = 0.5\cdot\bm{1}^{\top}\bm{x}^{\rm feas}$ and $u = 1.5\cdot\bm{1}^{\top}\bm{x}^{\rm feas}$.

Table \ref{Table:solver-free} reports the \textit{total} computational time together with the averaged residuals of feasibility violation and normalized objective value, defined by:
\begin{equation*}\small
	\begin{aligned}
		& \mathtt{feas}(\bm{x}) = \max\big\{\left\|\min\{\bm{x}, 0\}\right\|, ... \\
		& \qquad  \qquad \qquad \qquad \|\max\{l-\bm{1}^{\top}\bm{x}, 0\}\|, \, \|\min\{u-\bm{1}^{\top}\bm{x}, 0\}\|\big\} , \\
		& \mathtt{nobj}(\bm{x}) = \big( 1 + |{\mathtt f}(\bm{x})| + |{\mathtt f}(\bm{x}^{*})| \big)^{-1}\big|{\mathtt f}(\bm{x}) - {\mathtt f}(\bm{x}^{*})\big|,
	\end{aligned}
\end{equation*}
where $\bm{x}^*$ is the solution returned by Gurobi. The results confirm the superior efficiency of our solver-free method: the two methods attain comparable numerical accuracy, whereas our solver-free method is substantially faster at every tested scale. This advantage is particularly relevant in the distributed setting, since the projection problems \eqref{eq:simplex-like-2} (or \eqref{eq:simplex-like}) are solved repeatedly at every node and every ADMM iteration. Thus, replacing a generic QP solver with the proposed solver-free method can significantly reduce local computational overhead, especially for large-scale graph-structured OT problems.

\begin{table}[ht]
	\caption{Comparison of Gurobi and the solver-free method (Algorithm \ref{alg:simplex-like}) for projection problem \eqref{eq:simplex-like-2} across scales $p$. The ``\texttt{time}'' column reports total computational time in seconds.}
	\label{Table:solver-free}
	\centering
	\renewcommand{\arraystretch}{1.05}
	\setlength{\tabcolsep}{3.3pt}
	\scriptsize
	\resizebox{0.9\columnwidth}{!}{
	\begin{tabular}{c lll lll}
		\toprule
		& \multicolumn{3}{c}{\textbf{Gurobi}}
		& \multicolumn{3}{c}{\textbf{Solver-free}} \\
		\cmidrule(lr){2-4}\cmidrule(lr){5-7}
		$p$
		& \texttt{feas} & \texttt{nobj} & \texttt{time}
		& \texttt{feas} & \texttt{nobj} & \texttt{time} \\
		\midrule
		$10{,}000$
		& 5.43e-13 & 0 & 66.36
		& 2.03e-12 & 5.54e-11 & \textbf{0.38} \\

		$20{,}000$
		& 2.66e-12 & 0 & 78.58
		& 5.35e-12 & 6.43e-11 & \textbf{0.55} \\

		$30{,}000$
		& 5.41e-12 & 0 & 91.34
		& 1.00e-11 & 7.07e-11 & \textbf{0.70} \\

		$40{,}000$
		& 1.20e-11 & 0 & 104.18
		& 1.60e-11 & 7.49e-11 & \textbf{0.91} \\

		$50{,}000$
		& 2.01e-11 & 0 & 116.74
		& 1.80e-11 & 7.71e-11 & \textbf{1.03} \\
		\bottomrule
	\end{tabular}}
\end{table}

\subsection{Ablation Study of DAP-ADMM}\label{sec:ablation}
We next conduct an ablation study to investigate how the individual acceleration mechanisms affect the practical behavior of DAP-ADMM. All ablation variants, summarized in Table \ref{Table:DAP-ablation-config}, share the basic distributed updates in \eqref{eq:DAP-ADMM-S}-\eqref{eq:DAP-ADMM-X2}, but differ in the use of Halpern iteration, over-relaxation, periodic anchor restart, and adaptive local penalties. In particular, $\rho$ controls the over-relaxation in \eqref{eq:ex-Hal-S}-\eqref{eq:ex-Hal-D}, while the adaptive variant updates the local penalty matrices $\Sigma_1$ and $\Sigma_2$ through the distributed mechanism in Section  \ref{sec:self-ada}. Hence, the ablation separates the effect of the basic acceleration operations from that of the restart and penalty-adaptation strategies.

\begin{table}[ht]
	\centering
	\caption{Algorithmic configurations in the ablation study. Here, H, OR, R, and Ada denote the Halpern iteration, over-relaxation, periodic anchor restart and adaptive local penalty parameters, respectively; $\rho$ is the over-relaxation parameter.}
	\label{Table:DAP-ablation-config}
	\setlength{\tabcolsep}{3.5pt}
	\renewcommand{\arraystretch}{1.05}
	\begin{tabular}{lcccc}
		\toprule
		\textbf{method} & \textbf{H} & $\rho$ & \textbf{anchor} & \textbf{penalty} \\
		\midrule
		D-ADMM               & no  & $1$   & --        & fixed \\
		D-ADMM $+$ H         & yes & $1$   & $\bm{w}^0$ & fixed \\
		D-ADMM $+$ OR        & no  & $1.9$ & --        & fixed \\
		DAP-ADMM (fixed)     & yes & $2$   & $\bm{w}^0$ & fixed \\
		DAP-ADMM (R $+$ Ada) & yes & $2$   & restarted & adaptive \\
		\bottomrule
	\end{tabular}
\end{table}

\begin{figure*}[ht]
	\centering
	\subfloat[$\lambda=0$]{
		\includegraphics[width=0.32\textwidth]
		{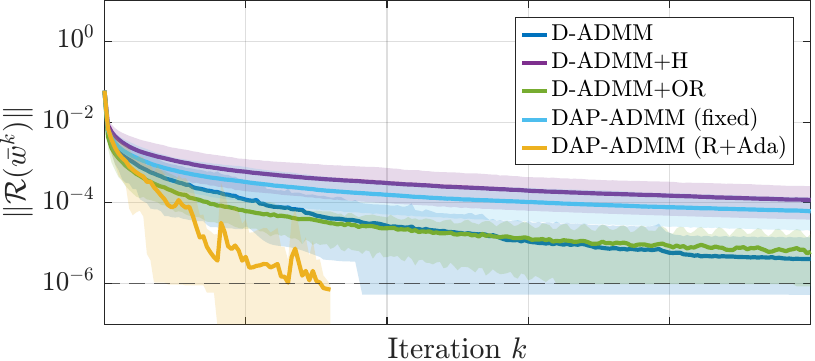}}
	\hfill
	\subfloat[$\lambda=0.1$]{
		\includegraphics[width=0.32\textwidth]
		{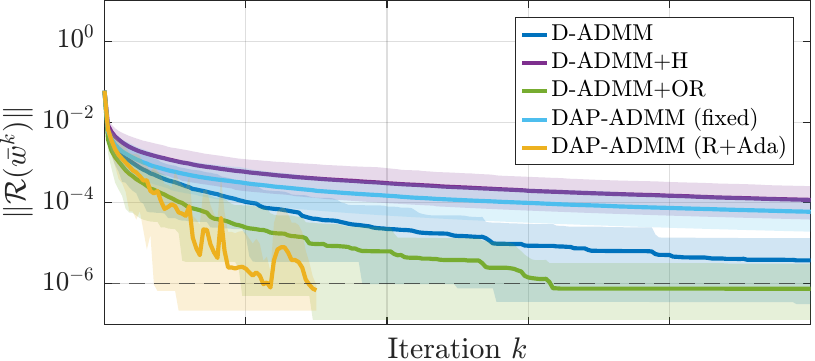}}
	\hfill
	\subfloat[$\lambda=1$]{
		\includegraphics[width=0.32\textwidth]
		{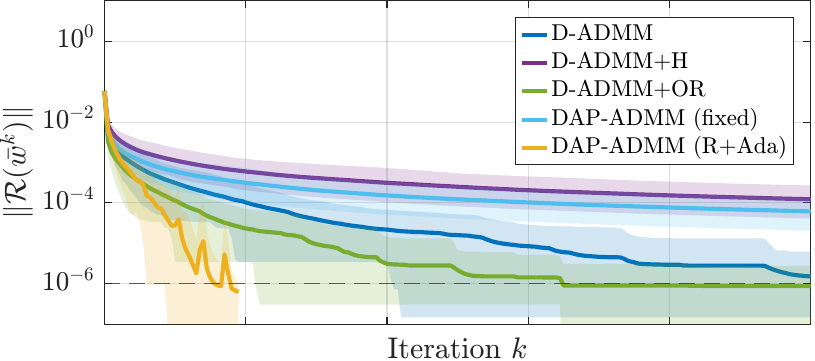}}
	\caption{KKT residual evolution for the ablation study on regular bipartite graphs with $m=n=300$ and $\iota=0.2$. Each solid curve represents the mean over $10$ paired instances, and the shaded region spans the minimum to the maximum across those instances.}
	\label{fig:DAP-ablation}
\end{figure*}
We conduct the study on regular bipartite graphs with $m=n=300$ and $\iota=0.2$. For each $\lambda\in\{0,0.1,1\}$, all variants are evaluated on $10$ randomly generated instances. Figure \ref{fig:DAP-ablation} reveals a consistent progression among the five configurations. Taking D-ADMM as the baseline, over-relaxation is able to improve the practical convergence, whereas the Halpern iteration without additional safeguards does not necessarily translate its theoretical residual acceleration into faster numerical convergence. In fact, Halpern iteration alone, or together with over-relaxation under fixed penalties, may slow the observed convergence.

This observation highlights a distinction between the theoretical acceleration mechanism and its practical realization. Once periodic anchor restart and adaptive local penalty updates are incorporated, the convergence behavior improves markedly: DAP-ADMM reaches the prescribed tolerance in all displayed cases and requires substantially fewer iterations than D-ADMM. These results therefore support the use of restart and local penalty adaptation as practical complements to the Halpern-based acceleration. Accordingly, DAP-ADMM in the following comparisons refers to this restarted and adaptive implementation.

\subsection{Comparison Between Distributed ADMMs}

In this part of the experiments, we compare the numerical performance of DAP-ADMM with the existing D-ADMM proposed in \cite{zhang2019consensus} for solving the synthetic GOT problems \eqref{eq:D-GOT}. Both algorithms adopt the solver-free method to compute their ADMM-subproblems. We check their KKT residual ``$\|\mathcal{R}(\widebar{\bm{w}}^{k})\|$'' and normalized primal objective value ``$\texttt{nobj}$'' at every $50$ iterations, where $\texttt{nobj}$ is defined by
\begin{equation*}
	\small
	\texttt{nobj}(\widebar{\bm{w}}^{k}) = \frac{\big|G_{1}(\widebar{\bm{s}}^{k}) + G_{2}(\widebar{\bm{d}}^{k}) - G_{1}({\bm{s}}^{*}) - G_{2}({\bm{d}}^{*})\big|}{1 + \big|G_{1}(\widebar{\bm{s}}^{k}) + G_{2}(\widebar{\bm{d}}^{k})\big| + \big|G_{1}({\bm{s}}^{*}) + G_{2}({\bm{d}}^{*})\big|},
\end{equation*}
with ${\bm{s}}^{*}$ and ${\bm{d}}^{*}$ being the reference solution returned by Gurobi. We fix its penalty parameter at $\sigma=1$ for D-ADMM. For DAP-ADMM, we set $\rho=2$ and initialize $\bm{\sigma}=(\bm{\sigma}^{s},\bm{\sigma}^{d})=(\bm{1}_{m},\bm{1}_{n})$. Following Section \ref{sec:ablation}, the local penalty parameters of DAP-ADMM are updated every $50$ iterations, and its anchor point of Halpern iteration is restarted at the same frequency according to \cite[Section 4]{syzz2025accelerating}.

\begin{table*}[ht]
\centering
\caption{Performance of D-ADMM and DAP-ADMM for solving \eqref{eq:D-GOT} on regular bipartite graphs, averaged over $10$ instances. Here, ``$\|\mathcal{R}(\protect\widebar{\bm{w}}^{k})\|$'' is the KKT-residual norm, ``\texttt{nobj}'' is the normalized primal objective value, and ``\#'' denotes the iteration count; computational time in seconds is reported in parentheses.}
\label{Table:vsADMM-regG}
\scalebox{0.9}{
\begin{tabular}{cllllllllll}
\toprule
& & \multicolumn{3}{c}{$\lambda=0$} & \multicolumn{3}{c}{$\lambda=0.1$} & \multicolumn{3}{c}{$\lambda=1$} \\ 
\cmidrule(l){3-5} \cmidrule(l){6-8} \cmidrule(l){9-11} 
$m=n$ & \textbf{method} & $\|\mathcal{R}(\widebar{\bm{w}}^{k})\|$ & \texttt{nobj} & \#(\texttt{time}) & $\|\mathcal{R}(\widebar{\bm{w}}^{k})\|$ & \texttt{nobj} & \#(\texttt{time}) & $\|\mathcal{R}(\widebar{\bm{w}}^{k})\|$ & \texttt{nobj} & \#(\texttt{time})  \\ 
\midrule
\multirow{2}{*}{100} & D-ADMM & 9.76e-07 & 1.86e-07 & 5505(1.42) & 6.16e-07 & 2.55e-07 & 5105(1.46) & 7.92e-07 & 3.63e-07 & 3755(0.44) \\ 
& DAP-ADMM & \textbf{3.86e-07} & \textbf{1.67e-07} & \textbf{850(0.26)} & \textbf{1.99e-07} & \textbf{2.23e-07} & \textbf{820(0.25)} & \textbf{3.70e-07} & \textbf{2.97e-07} & \textbf{735(0.09)} \\ 
\midrule
\multirow{2}{*}{300} & D-ADMM & 2.67e-06 & 7.67e-07 & 8870(11.27) & 2.47e-06 & 7.15e-07 & 8790(11.08) & 2.27e-06 & 7.72e-07 & 8070(4.57) \\ 
& DAP-ADMM & \textbf{6.20e-07} & \textbf{2.10e-07} & \textbf{1550(2.23)} & \textbf{4.28e-07} & \textbf{2.23e-07} & \textbf{1460(2.07)} & \textbf{3.78e-07} & \textbf{3.74e-07} & \textbf{1320(0.83)} \\ 
\midrule
\multirow{2}{*}{500} & D-ADMM & 5.86e-06 & 6.55e-06 & 9450(24.81) & 5.15e-06 & 5.82e-06 & 9370(12.71) & 4.28e-06 & 5.49e-06 & 8530(8.82) \\ 
& DAP-ADMM & \textbf{6.08e-07} & \textbf{2.60e-07} & \textbf{1950(5.91)} & \textbf{5.26e-07} & \textbf{2.94e-07} & \textbf{1695(2.33)} & \textbf{6.36e-07} & \textbf{5.02e-07} & \textbf{1380(1.65)} \\ 
\midrule
\multirow{2}{*}{1000} & D-ADMM & 2.07e-05 & 5.22e-05 & 9890(139.09) & 2.01e-05 & 5.24e-05 & 9905(82.27) & 1.96e-05 & 5.43e-05 & 10000(82.86) \\ 
& DAP-ADMM & \textbf{7.77e-07} & \textbf{3.58e-07} & \textbf{2970(48.96)} & \textbf{5.44e-07} & \textbf{5.60e-07} & \textbf{2540(25.36)} & \textbf{7.20e-07} & \textbf{5.56e-07} & \textbf{2245(22.39)} \\ 
\bottomrule
\end{tabular}}
\end{table*}

We first consider regular bipartite graphs with $m=n\in\{100,300,500,1000\}$ and $\iota=0.2$. For each graph size, Table \ref{Table:vsADMM-regG} reports the average performance over $10$ random instances for $\lambda\in\{0,0.1,1\}$. DAP-ADMM consistently requires substantially fewer iterations while attaining lower KKT residuals and normalized objective errors in most tested cases. Moreover, the difference becomes more pronounced as the graph size increases, indicating that the proposed acceleration is increasingly beneficial for larger distributed problems.

\begin{table*}[ht]
	\centering
	\caption{Same as Table \ref{Table:vsADMM-regG}, but for solving problems \eqref{eq:D-GOT} on bipartite graphs with random degrees.}
	\label{Table:vsADMM-ranG}
	\scalebox{0.9}{
		\begin{tabular}{cllllllllll}
			\toprule
			& & \multicolumn{3}{c}{$\lambda=0$} & \multicolumn{3}{c}{$\lambda=0.1$} & \multicolumn{3}{c}{$\lambda=1$} \\ 
			\cmidrule(l){3-5} \cmidrule(l){6-8} \cmidrule(l){9-11} 
			$m=n$ & \textbf{method} & $\|\mathcal{R}(\widebar{\bm{w}}^{k})\|$ & \texttt{nobj} & \#(\texttt{time}) & $\|\mathcal{R}(\widebar{\bm{w}}^{k})\|$ & \texttt{nobj} & \#(\texttt{time}) & $\|\mathcal{R}(\widebar{\bm{w}}^{k})\|$ & \texttt{nobj} & \#(\texttt{time})  \\ 
			\midrule
			\multirow{2}{*}{100} & D-ADMM & 8.56e-07 & 1.96e-07 & 3850(1.21) & \textbf{4.24e-07} & 2.65e-07 & 2480(0.77) & 6.64e-07 & 4.07e-07 & 2215(0.71) \\ 
			& DAP-ADMM & \textbf{2.94e-07} & \textbf{1.52e-07} & \textbf{795(0.26)} & 5.01e-07 & \textbf{2.41e-07} & \textbf{620(0.20)} & \textbf{4.39e-07} & \textbf{3.43e-07} & \textbf{570(0.19)} \\ 
			\midrule
			\multirow{2}{*}{300} & D-ADMM & 4.56e-06 & 7.29e-06 & 9735(16.02) & 4.03e-06 & 7.16e-06 & 9090(14.93) & 2.67e-06 & 4.60e-06 & 7705(12.79) \\ 
			& DAP-ADMM & \textbf{6.00e-07} & \textbf{3.96e-07} & \textbf{1845(3.25)} & \textbf{5.05e-07} & \textbf{2.47e-07} & \textbf{1480(2.63)} & \textbf{5.45e-07} & \textbf{5.79e-07} & \textbf{1210(2.19)} \\ 
			\midrule
			\multirow{2}{*}{500} & D-ADMM & 2.67e-05 & 1.15e-04 & 9180(34.42) & 2.52e-05 & 1.14e-04 & 8925(33.54) & 2.42e-05 & 1.14e-04 & 8930(33.78) \\ 
			& DAP-ADMM & \textbf{5.97e-07} & \textbf{2.88e-07} & \textbf{2280(9.20)} & \textbf{4.87e-07} & \textbf{6.39e-07} & \textbf{2045(8.31)} & \textbf{5.79e-07} & \textbf{7.10e-07} & \textbf{1620(6.61)} \\ 
			\midrule
			\multirow{2}{*}{1000} & D-ADMM & 1.17e-04 & 1.09e-03 & 10000(158.26) & 1.15e-04 & 1.09e-03 & 10000(157.95) & 1.14e-04 & 1.10e-03 & 9690(153.28) \\ 
			& DAP-ADMM & \textbf{1.14e-05} & \textbf{2.75e-05} & \textbf{5120(91.12)} & \textbf{4.77e-06} & \textbf{7.33e-06} & \textbf{4455(79.32)} & \textbf{8.31e-07} & \textbf{2.84e-06} & \textbf{3680(65.24)} \\ 
			\bottomrule
	\end{tabular}}
\end{table*}

Table \ref{Table:vsADMM-ranG} further evaluates the two algorithms on bipartite graphs with random node degrees. Although the irregular topology leads to more heterogeneous local subproblems, DAP-ADMM largely preserves its advantage in both accuracy and iteration count. Taken together, the results on regular and random-degree graphs suggest that the combined acceleration, restart, and adaptive-penalty mechanisms improve the practical efficiency of distributed ADMM without relying on a particular graph topology, thereby reducing the communication rounds required to obtain accurate transportation plans.

\section{Conclusions}\label{sec:conclude}
In this paper, we adapted AP-ADMM to solve quadratically regularized optimal transport on bipartite graphs and developed a fully distributed implementation, termed as DAP-ADMM. Beyond this distributed extension, we developed an exact solver-free method for efficiently solving the local ADMM subproblems and a distributed self-adaptive penalty mechanism driven by locally computable KKT residuals. For fixed local penalties, we established global convergence and non-ergodic $O(1/k)$ rates, while numerical results demonstrated the effectiveness and efficiency of these developments. Future work will consider asynchronous and unreliable communications and approximate projections for high-degree nodes.

\appendices
\section{Missing Proofs in Section \ref{sec:thms}}\label{sec:App-proofs}
To present the global convergence and complexity results of DAP-ADMM in Algorithm \ref{alg:DAP-ADMM}, we rewrite problem \eqref{eq:D-GOT} as:
\begin{equation}\label{eq:temp-ADMM}
	\begin{aligned}
		\hspace{-2mm}\min_{\bm{y}\in\mathbb{R}^{|\mathcal{E}|},\bm{z}\in\mathbb{R}^{2|\mathcal{E}|}} & ~~ F(\bm{y}) + G(\bm{z}) ~~ \mathrm{ s.t. } ~~ B_{1}\bm{y} + B_{2}\bm{z} = \bm{0},
	\end{aligned}
\end{equation}
where $\bm{y} = [\bm{\pi}]$, $\bm{z} = [\bm{s}; \bm{d}]$, and
\begin{equation*}
	\begin{aligned}
		& F(\bm{y}) = 0, \quad G(\bm{z}) = G_{1}(\bm{s}) + G_{2}(\bm{d}),\\ 
		& B_{1} = \begin{bmatrix}
			-I_{|\mathcal{E}|} \\
			I_{|\mathcal{E}|}
		\end{bmatrix}, \quad B_{2} = \begin{bmatrix}
			I_{|\mathcal{E}|} & O \\ O & -I_{|\mathcal{E}|}
		\end{bmatrix}.
	\end{aligned}
\end{equation*}
We also denote $\bm{x} = [\bm{\theta}; \bm{\xi}]$. Thus, vector $\bm{w}$ in Algorithm \ref{alg:acc-preADMM} can be rewritten as $\bm{w} = (\bm{y},\bm{z},\bm{x})$. Then, the KKT system \eqref{eq:KKTsys} can be further rewritten as
\begin{equation}\label{eq:KKTsys2}
	\begin{aligned}
		-B_{1}^{\top} \bm{x} & \in \partial F\left(\bm{y}\right), \\
		-B_{2}^{\top} \bm{x} & \in \partial G\left(\bm{z}\right), \\
		\bm{0} & = - B_{1} \bm{y} - B_2 \bm{z},
	\end{aligned}
\end{equation}
where $\partial F$ and $\partial G$ are the subdifferential mappings of $F$ and $G$, respectively. Accordingly, the KKT mapping $\mathcal{T}$ defined in \eqref{eq:KKTmap-GOT} can be reformulated as
\begin{equation}\label{eq:KKTmap-GOT2}
	\mathcal{T}\bm{w}=\begin{pmatrix}
		\partial F(\bm{y})+B_{1}^{\top} \bm{x} \\
		\partial G(\bm{z})+B_{2}^{\top} \bm{x} \\
		-B_1 \bm{y}-B_2 \bm{z}
	\end{pmatrix}, \quad \forall \bm{w}=(\bm{y}, \bm{z}, \bm{x}) \in \mathbb{W} ,
\end{equation}
and the positive semi-definite linear operator $\mathcal{M}$ defined in \eqref{eq:precond} can be reformulated as
\begin{equation}\label{eq:precond2}
	\mathcal{M} := \begin{bmatrix}
		B_{1}^{\top}\Sigma B_{1} & O & B_{1}^{\top} \\
		O & O & O \\
		B_{1} & O & \Sigma^{-1}
	\end{bmatrix},
\end{equation}
where $\Sigma:= [\Sigma_{1}, O; O, \Sigma_{2}]$.

\subsection{Proof of Theorem \ref{thm:acc-preADMM-convergence}}\label{sec:App-proof-convergence}

We first recall the following result.
\begin{lemma}[\text{\cite[Lemma 3.1]{syzz2025accelerating}}]\label{lem:Lip-cons}
	Consider the operators $\mathcal{T}$ defined in \eqref{eq:KKTmap-GOT2} and $\mathcal{M}$ defined in \eqref{eq:precond2}. Then, $(\mathcal{M}+\mathcal{T})^{-1}$ is single-valued and Lipschitz continuous.
\end{lemma}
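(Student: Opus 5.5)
The plan is to prove the lemma directly from the block structures in \eqref{eq:KKTmap-GOT2} and \eqref{eq:precond2}, without appealing to \cite{syzz2025accelerating}. Fix an arbitrary right-hand side $\bm{v}=(\bm{v}_y,\bm{v}_z,\bm{v}_x)\in\mathbb{W}$ and look at the inclusion $\bm{v}\in(\mathcal{M}+\mathcal{T})\bm{w}$ for $\bm{w}=(\bm{y},\bm{z},\bm{x})$. Block by block it reads
\begin{equation*}
\begin{aligned}
\bm{v}_y &\in B_1^{\top}\Sigma B_1\bm{y}+B_1^{\top}\bm{x}+\partial F(\bm{y})+B_1^{\top}\bm{x},\\
\bm{v}_z &\in \partial G(\bm{z})+B_2^{\top}\bm{x},\\
\bm{v}_x &= B_1\bm{y}+\Sigma^{-1}\bm{x}-B_1\bm{y}-B_2\bm{z}.
\end{aligned}
\end{equation*}
I will solve these three relations in the order $\bm{x}$ in terms of $\bm{z}$, then $\bm{z}$, then $\bm{y}$. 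At each stage I will check that the solution exists, is unique, and depends Lipschitz-continuously on $\bm{v}$.

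\textbf{Step 1: the last block.} The terms $B_1\bm{y}$ cancel, leaving $\Sigma^{-1}\bm{x}-B_2\bm{z}=\bm{v}_x$. Hence $\bm{x}=\Sigma(\bm{v}_x+B_2\bm{z})$, which is affine in $(\bm{v}_x,\bm{z})$.

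\textbf{Step 2: the second block.} Substituting $\bm{x}$ gives
\begin{equation*}
\bm{0}\in\partial G(\bm{z})+B_2^{\top}\Sigma B_2\bm{z}+B_2^{\top}\Sigma\bm{v}_x-\bm{v}_z .
\end{equation*}
Since $B_2=\operatorname{Diag}(I,-I)$, we have $B_2^{\top}\Sigma B_2=\Sigma$. This matrix is positive definite because every $\sigma_i^s,\sigma_j^d>0$, so $\Sigma_1,\Sigma_2$ have positive diagonals. The inclusion is therefore the optimality condition of
\begin{equation*}
\min_{\bm{z}}\; G(\bm{z})+\tfrac12\|\bm{z}\|_{\Sigma}^2+\langle \Sigma B_2\bm{v}_x-\bm{v}_z,\bm{z}\rangle .
\end{equation*}
Here $G$ is proper, closed and convex, since it is a quadratic plus the indicators of the nonempty closed convex sets $\mathcal{U},\mathcal{V}$. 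The objective is thus strongly convex, and the minimizer is unique. Equivalently, $\bm{z}=(\Sigma+\partial G)^{-1}(\bm{v}_z-\Sigma B_2\bm{v}_x)$. The resolvent $(\Sigma+\partial G)^{-1}$ is a proximal map in the $\Sigma$-metric, composed with $\Sigma^{-1}$. By monotonicity of $\partial G$ it is Lipschitz with constant $1/\lambda_{\min}(\Sigma)$: if $\bm{u}_k\in(\Sigma+\partial G)\bm{z}_k$ for $k=1,2$, then $\langle \bm{u}_1-\bm{u}_2,\bm{z}_1-\bm{z}_2\rangle\ge\lambda_{\min}(\Sigma)\|\bm{z}_1-\bm{z}_2\|^2$. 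Consequently $\bm{z}$, and then $\bm{x}$ from Step 1, are single-valued and Lipschitz in $\bm{v}$.

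\textbf{Step 3: the first block.} Since $F\equiv0$, we have $\partial F(\bm{y})=\{\bm{0}\}$, and $B_1^{\top}\Sigma B_1=\Sigma_1+\Sigma_2$, which is a positive definite diagonal matrix. Hence $\bm{y}=(\Sigma_1+\Sigma_2)^{-1}(\bm{v}_y-2B_1^{\top}\bm{x})$, which is affine in $(\bm{v}_y,\bm{x})$.

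\textbf{Conclusion.} Composing Steps 1--3 shows that for every $\bm{v}$ there is exactly one $\bm{w}$ with $\bm{v}\in(\mathcal{M}+\mathcal{T})\bm{w}$. Composing the Lipschitz bounds shows that $\bm{v}\mapsto\bm{w}$ is Lipschitz. The constant depends only on $\lambda_{\min}(\Sigma)$, $\lambda_{\max}(\Sigma)$ and $\|B_1\|,\|B_2\|$.

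The only step needing care is Step 2. The $\bm{z}$-block of $\mathcal{M}$ is zero, so the strong monotonicity there does not come from $\mathcal{M}$ directly. It comes from eliminating $\bm{x}$, which produces the term $B_2^{\top}\Sigma B_2=\Sigma\succ0$. One must also confirm that $\operatorname{dom}\partial G\neq\emptyset$, which holds because $\mathcal{U}$ and $\mathcal{V}$ are nonempty; otherwise existence of the minimizer could fail. Everything else is linear algebra with positive definite diagonal matrices.
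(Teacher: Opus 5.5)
Your proof is correct and is essentially the argument the paper delegates to \cite[Lemma 3.1]{syzz2025accelerating}: the paper only records the substitution $\sigma\to\Sigma$ and the two facts $B_1^{\top}\Sigma B_1\succ0$ and $B_2^{\top}\Sigma B_2\succ0$. Your block elimination (solve the last row for $\bm{x}$, then the strongly convex $\bm{z}$-inclusion with modulus $B_2^{\top}\Sigma B_2=\Sigma$, then the linear $\bm{y}$-equation with $B_1^{\top}\Sigma B_1=\Sigma_1+\Sigma_2$) is exactly where those two facts enter, and writing it out has the added benefit of being self-contained and of making the Lipschitz constant explicit in terms of $\lambda_{\min}(\Sigma)$ and $\lambda_{\max}(\Sigma)$.
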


\begin{proof}
	The proof follows directly from that of \cite[Lemma~3.1]{syzz2025accelerating} by replacing $\sigma B_{2}^{*}$, $\sigma B_{1}^{*}B_{1}$, and $\sigma^{-1}$ with $\Sigma B_{2}^{\top}$, $B_{1}^{\top}\Sigma B_{1}$, and $\Sigma^{-1}$, respectively, and by setting $\Sigma_{f_{1}}$, $\Sigma_{f_{2}}$, $\mathcal{T}_{1}$, and $\mathcal{T}_{2}$ to zero. Noting that $B_{1}^{\top}\Sigma B_1\succ0$ and $B_{2}^{\top}\Sigma B_2\succ0$, the same argument yields the desired result.
\end{proof}
Based on Lemma \ref{lem:Lip-cons}, we now proceed to prove Theorem \ref{thm:acc-preADMM-convergence}, which follows directly from the proof of \cite[Proposition 3.2]{syzz2025accelerating} and \cite[Theorem 2.7]{syzz2025accelerating}.
\begin{proof}
	According to Lemma \ref{lem:Lip-cons} and \cite[Proposition 3.2]{syzz2025accelerating}, we can obtain that, under Assumption \ref{asp:KKT-nonempty}, the sequence $\big\{\widebar{\bm{w}}^k\big\}$ generated by Algorithm \ref{alg:DAP-ADMM} coincides with the sequence $\big\{\bar{\bm{w}}^k\big\}$ generated by an accelerated degenerate proximal point algorithm (acc-dPPA) \cite[Algorithm 2.1]{syzz2025accelerating} with $\alpha = 2$:
	\begin{numcases}{\text{(acc-dPPA) ~}}
		\bar{\bm{w}}^k=(\mathcal{M}+\mathcal{T})^{-1} \mathcal{M} \bm{w}^k, \label{eq:dPPA}\\
		\widehat{\bm{w}}^{k+1} = \bm{w}^k + \rho(\bar{\bm{w}}^k - \bm{w}^{k}), \label{eq:relaxation}\\
		\bm{w}^{k+1} = \frac{1}{k+2}\bm{w}^{0} + \frac{k+1}{k+2}\widehat{\bm{w}}^{k+1}. \label{eq:Halpern-iteration}
	\end{numcases}
	Thus, the global convergence of Algorithm \ref{alg:DAP-ADMM} follows from \cite[Theorem 2.7]{syzz2025accelerating}.
\end{proof}

\subsection{Proof of Theorem \ref{thm:acc-preADMM-rate}}\label{sec:App-proof-rate}
To establish the convergence rate of the primal objective value, we shall introduce the following key lemma:
\begin{lemma}[\text{\cite[Lemma 3.6]{syzz2025accelerating}}]\label{lem:obj-gap}
Suppose that Assumption \ref{asp:KKT-nonempty} holds. Let $\{\bar{\bm{w}}^{k}\} = \{(\bar{\bm{y}}^{k},\bar{\bm{z}}^{k},\bar{\bm{x}}^{k})\}$ be the sequence generated by scheme \eqref{eq:dPPA}-\eqref{eq:Halpern-iteration} and let $(\bm{y}^{*},\bm{z}^{*},\bm{x}^{*})$ be the limit point of the sequence $\{(\bar{\bm{y}}^{k},\bar{\bm{z}}^{k},\bar{\bm{x}}^{k})\}$. Then, for all $k\geq0$, we have the following bounds:
\begin{equation}\label{eq:obj-gap-ub}
    h(\bar{\bm{y}}^{k},\bar{\bm{z}}^{k}) \leq \big\langle \Sigma B_{1}(\bm{y}^{*}-\bar{\bm{y}}^{k}) - \bar{\bm{x}}^{k}, B_{1}\bar{\bm{y}}^{k} + B_{2}\bar{\bm{z}}^{k}\big\rangle,
\end{equation}
and
\begin{equation}\label{eq:obj-gap-lb}
    h(\bar{\bm{y}}^{k},\bar{\bm{z}}^{k}) \geq - \big\langle \bm{x}^{*}, B_{1}\bar{\bm{y}}^{k} + B_{2}\bar{\bm{z}}^{k}\big\rangle.
\end{equation}
\end{lemma}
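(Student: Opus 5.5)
The plan is to read off the optimality conditions of the degenerate proximal step \eqref{eq:dPPA} componentwise. Then both bounds follow from the subgradient inequality for $G$, applied once at $\bar{\bm z}^k$ and once at $\bm z^*$. Throughout I use that $F\equiv0$, so $\partial F\equiv\{\bm 0\}$, and that $h(\bar{\bm y}^k,\bar{\bm z}^k)=G(\bar{\bm z}^k)-G(\bm z^*)$ depends only on $\bm z$. I abbreviate $\bm r^k:=B_1\bar{\bm y}^k+B_2\bar{\bm z}^k$.

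\textbf{Step 1 (optimality conditions of \eqref{eq:dPPA}).} By Lemma \ref{lem:Lip-cons}, $\bar{\bm w}^k$ is well defined and satisfies $\mathcal M(\bm w^k-\bar{\bm w}^k)\in\mathcal T\bar{\bm w}^k$. Using the block forms \eqref{eq:KKTmap-GOT2} and \eqref{eq:precond2}, the three rows read
\begin{align*}
&B_1^\top\Sigma B_1(\bm y^k-\bar{\bm y}^k)+B_1^\top(\bm x^k-\bar{\bm x}^k)=B_1^\top\bar{\bm x}^k,\\
&-B_2^\top\bar{\bm x}^k\in\partial G(\bar{\bm z}^k),\\
&B_1(\bm y^k-\bar{\bm y}^k)+\Sigma^{-1}(\bm x^k-\bar{\bm x}^k)=-\bm r^k .
\end{align*}
Multiplying the third row by $\Sigma$ gives $\Sigma B_1(\bm y^k-\bar{\bm y}^k)+\bm x^k-\bar{\bm x}^k=-\Sigma\bm r^k$. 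Applying $B_1^\top$ to this identity and substituting into the first row yields the key identity
\[
B_1^\top\bar{\bm x}^k=-B_1^\top\Sigma\bm r^k .
\]

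\textbf{Step 2 (upper bound \eqref{eq:obj-gap-ub}).} From the second row and convexity of $G$,
\[
G(\bm z^*)\ge G(\bar{\bm z}^k)-\langle\bar{\bm x}^k,B_2(\bm z^*-\bar{\bm z}^k)\rangle .
\]
Hence $h\le\langle\bar{\bm x}^k,B_2\bm z^*-B_2\bar{\bm z}^k\rangle$. Since $(\bm y^*,\bm z^*,\bm x^*)$ solves \eqref{eq:KKTsys2} by Theorem \ref{thm:acc-preADMM-convergence}, we have $B_2\bm z^*=-B_1\bm y^*$. Therefore
\[
B_2\bm z^*-B_2\bar{\bm z}^k=B_1(\bar{\bm y}^k-\bm y^*)-\bm r^k .
\]
The term $\langle\bar{\bm x}^k,B_1(\bar{\bm y}^k-\bm y^*)\rangle=\langle B_1^\top\bar{\bm x}^k,\bar{\bm y}^k-\bm y^*\rangle$ is then rewritten with the key identity as $\langle\Sigma B_1(\bm y^*-\bar{\bm y}^k),\bm r^k\rangle$. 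Collecting terms gives exactly $\langle\Sigma B_1(\bm y^*-\bar{\bm y}^k)-\bar{\bm x}^k,\bm r^k\rangle$.

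\textbf{Step 3 (lower bound \eqref{eq:obj-gap-lb}).} The KKT conditions at the limit give $-B_2^\top\bm x^*\in\partial G(\bm z^*)$ and $B_1^\top\bm x^*\in\partial F(\bm y^*)=\{\bm 0\}$. The subgradient inequality then gives $h\ge-\langle\bm x^*,B_2\bar{\bm z}^k-B_2\bm z^*\rangle$. Since $B_1^\top\bm x^*=\bm 0$, we may add $\langle\bm x^*,B_1\bar{\bm y}^k\rangle=0$ and $\langle\bm x^*,B_1\bm y^*\rangle=0$ freely. Using $B_2\bm z^*=-B_1\bm y^*$, this turns the right-hand side into $-\langle\bm x^*,\bm r^k\rangle$.

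The only nonroutine point is Step 1. One has to notice that the first and third block rows must be combined to eliminate $\bm y^k$ and $\bm x^k$, which produces $B_1^\top\bar{\bm x}^k=-B_1^\top\Sigma\bm r^k$. The rest is bookkeeping with the subgradient inequality and the KKT relations at $\bm w^*$.
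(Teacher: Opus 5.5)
Your proof is correct and takes essentially the same route as the paper, which simply defers to the argument of \cite[Lemma 3.6]{syzz2025accelerating} specialized to $F\equiv 0$, $\mathcal{T}_1=\mathcal{T}_2=0$ and $\Sigma$ in place of $\sigma$; you write that specialization out explicitly via the optimality conditions of \eqref{eq:dPPA}, the subgradient inequality for $G$ at $\bar{\bm z}^k$ and at $\bm z^*$, and the KKT relations at the limit. The only slip is cosmetic: \eqref{eq:KKTsys2} gives $-B_1^{\top}\bm x^*\in\partial F(\bm y^*)$ rather than $B_1^{\top}\bm x^*\in\partial F(\bm y^*)$, which is immaterial since $\partial F\equiv\{\bm 0\}$.
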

\begin{proof}
	The proof also follows directly from that of \cite[Lemma 3.6]{syzz2025accelerating} by replacing $\sigma B_{1}$ and $\sigma B_{2}$ with $\Sigma B_{1}$ and $\Sigma B_{2}$, and by setting $\mathcal{T}_{1}$ and $\mathcal{T}_{2}$ to be zero operators. With these adjustments, we can apply the proof of \cite[Lemma 3.6]{syzz2025accelerating} to obtain the desired results.
\end{proof}
We now proceed with the proof of Theorem \ref{thm:acc-preADMM-rate}.
\begin{proof}
	By replacing $\sigma B_{1}$ and $\sigma B_{2}$ with $\Sigma B_{1}$ and $\Sigma B_{2}$, and letting $\mathcal{T}_{1}$ and $\mathcal{T}_{2}$ be zero operators, we can directly follow the proof of \cite[Theorem 3.7]{syzz2025accelerating} to obtain that for any $k \geq 0$,
	\begin{equation}\label{eq:rate-of-KKTres-cmpt}
		\begin{aligned}
			\big\|\mathcal{R}(\bar{\bm{w}}^k)\big\|  \leq \left(\frac{\lambda_{\max}(\Sigma)\|B_{1}\|+1}{\sqrt{\lambda_{\min}(\Sigma)}}\right) \frac{2 R_0}{\rho(k+1)}.
		\end{aligned}
	\end{equation}
	Since $\|B_{1}\|=\sqrt{2}$, this yields the inequality \eqref{eq:rate-of-KKTres}. Next, according to Lemma \ref{lem:obj-gap} and that $\mathcal{T}_{1} = \mathcal{T}_{2} = 0$, we have, for all $k \geq 0$,
	\begin{equation}\label{eq:rate-of-obj-gap-cmpt-lhs}
		\begin{aligned}
			h(\bar{\bm{y}}^k, \bar{\bm{z}}^k) & \geq -\|\bm{x}^{*}\|\big\|B_1 \bar{\bm{y}}^k + B_2 \bar{\bm{z}}^k \big\| \\
			& \geq - \frac{2 R_0\|\bm{x}^*\|}{\sqrt{\lambda_{\min}(\Sigma)} \rho(k+1)},
		\end{aligned}
	\end{equation}
	and
	\begin{equation}\label{eq:rate-of-obj-gap-cmpt-rhs}
		\begin{aligned}
			h(\bar{\bm{y}}^k, \bar{\bm{z}}^k) \leq \; \frac{2 R_0 \big(\sqrt{\lambda_{\max}(\Sigma)} R_0 + \|\bm{x}^*\|\big)}{\sqrt{\lambda_{\min}(\Sigma)} \rho(k+1)}.
		\end{aligned}
	\end{equation}
	By substituting $\bar{\bm{y}}^{k} = \widebar{\bm{\pi}}^{k}$, $\bar{\bm{z}}^{k} = (\widebar{\bm{s}}^{k}, \widebar{\bm{d}}^{k})$ and $\bm{x}^{*} = (\bm{\theta}^{*},\bm{\xi}^{*})$ into \eqref{eq:rate-of-obj-gap-cmpt-lhs} and \eqref{eq:rate-of-obj-gap-cmpt-rhs}, we obtain both sides of inequality \eqref{eq:rate-of-obj-gap}. Therefore, the proof of Theorem \ref{thm:acc-preADMM-rate} is completed.
\end{proof}


\bibliographystyle{IEEEtran}
\bibliography{Ref_GOT}


%

%
%
%

\vfill

\end{document}